\documentclass[twocolumn]{article}

\usepackage[english]{babel}
\usepackage{amsmath,amssymb,amsthm}
\usepackage[algo2e,lined,linesnumbered,ruled,noend]{algorithm2e}
\usepackage{graphicx}
\graphicspath{{./pics/}}

\usepackage{float}

\usepackage{subfigure}
\newtheorem{definition}{Definition}
\newtheorem{lemma}{Lemma}
\newtheorem{corollary}{Corollary}
\newtheorem{theorem}{Theorem}

\usepackage{placeins}
\usepackage{url}

\pagestyle{headings}
\newcounter{observationctr} \setcounter{observationctr}{0}
\newenvironment{observation}[1] {
\refstepcounter{observationctr}\vspace{0.2cm}\noindent{\bf Observation \arabic{observationctr}. }#1}{\par}

\newcommand{\depends}[1]{}

\newcommand{\state}[3]{\ensuremath{(#2^{#3})#1}}
\newcommand{\uim}{\state{u}{i}{-}}
\newcommand{\vim}{\state{v}{i}{-}}
\newcommand{\wim}{\state{w}{i}{-}}
\newcommand{\uip}{\state{u}{i}{+}}
\newcommand{\vip}{\state{v}{i}{+}}
\newcommand{\wip}{\state{w}{i}{+}}

\newcommand{\vimm}{\state{v}{(i-1)}{-}}

\newcommand{\uimp}{\state{u}{(i-1)}{+}}

\newcommand{\uipm}{\state{u}{(i+1)}{-}}

\usepackage{tikz}
\usetikzlibrary{shapes}
\usetikzlibrary{arrows,petri,topaths}
\usetikzlibrary{decorations.pathmorphing}
\usetikzlibrary{decorations.markings}
\tikzstyle{vertex}=[circle, draw, inner sep=0pt, minimum size=6pt]
\newcommand{\vertex}{\node[vertex]}

\newcommand{\G}{\mathcal{G}}
\newcommand{\T}{\mathcal{T}}
\newcommand{\F}{\mathcal{F}}

\SetKwFor{ForAll}{forall}{do}{end forall}

 \SetKwComment{com}{//}{}
 \SetKwComment{point}{}{}
 \SetKwFor{UpOnTimer}{New round}{do}{end}
 \SetKwInput{Init}{Initial state}
  \SetKwInput{Sending}{Sending}
    \SetKwInput{Reception}{Reception}
      \SetKwInput{compute}{Computation on a node $v$}
  \SetKwInput{PreCompute}{Pre-compute on a node $v$}
    \SetKwInput{PostCompute}{Post-compute on a node $v$}
    \SetKwInput{Round}{Forever Do}
    \SetKwInput{Initialization}{Initialization}
    \SetKwInput{Round}{Execution of a round (forever loop)}
\SetKwInput{Data}{Each node has}

\let\oldnl\nl
\newcommand{\nonl}{\renewcommand{\nl}{\let\nl\oldnl}}

\newcommand{\comm}[1]{\com{\ \color{gray}#1}}
 
\begin{document}

\title{Maintaining a Distributed Spanning Forest in Highly Dynamic Networks
\thanks{A preliminary version of this work (without proofs of correctness and much revisited since) was presented at the 18th Int. Conf. on Principles of Distributed Systems (OPODIS, 2014).}
}

\author{Matthieu Barjon, Arnaud Casteigts, Serge Chaumette, Colette Johnen, and Yessin M. Neggaz\bigskip\\
LaBRI, CNRS, University of Bordeaux}
\date{}

\maketitle

\begin{abstract}
  Highly dynamic networks are characterized by frequent changes in the availability of communication links. These networks are often partitioned into several components, which split and merge unpredictably. We present a distributed algorithm that maintains a forest of (as few as possible) spanning trees in such a network, with no restriction on the rate of change. Our algorithm is inspired by high-level graph transformations, which we adapt here in a (synchronous) message passing model for dynamic networks. The resulting algorithm
has the following properties: First, every decision is purely local---in each round, a node only considers its role and that of its neighbors in the tree, with no further information propagation (in particular, no wave mechanisms). Second, whatever the rate and scale of the changes, the algorithm guarantees that, by the end of every round, the network is covered by a forest of spanning trees in which 1) no cycle occur, 2) every node belongs to exactly one tree, and 3) every tree contains exactly one root (or token). 
We primarily focus on the correctness of this algorithm, which is established rigorously. While performance is not the main focus, we suggest new complexity metrics for such problems, and report on preliminary experimentation results validating our algorithm in a practical scenario.
\end{abstract}


\section{Introduction}

The current development of mobile and wireless technologies enables direct {\it ad hoc} communication between various kinds of mobile entities, such as vehicles, smartphones, terrestrian robots, flying robots, or satellites. In all these contexts, the set of communication links between entities (network topology) changes continuously. Not only changes are frequent, but in general they are unpredictable and can make the network partitioned at any time. Clearly, the usual assumption of connectivity does not hold here. Also, the classical view of a network whose dynamics corresponds to {\em failures} is no longer suitable in these scenarios, where dynamics is the norm rather than the exception. 

This shift in paradigm impacts algorithms and the definition of problems all together. What does it mean, for instance, to elect a leader in a partitioned network? Is the objective to distinguish a unique global leader, whose leadership materialize over time and space, or is it rather to {\em maintain} a unique leader in each connected component, deleting one when two partitions merge and creating a new one when a partition splits? 
The same remark holds for spanning trees. Should an algorithm construct a unique, global tree whose logical edges survive network intermittence, or should it build and maintain a {\em forest} of trees, each of which spans a (as large as possible) part of the network in a classical way? Both viewpoints make sense, and have been considered {\it e.g.} in~\cite{AE84,CFMS12} (former interpretation) or ~\cite{Awerbuch08,CCGP13} (second interpretation).

In this paper, we focus on the second interpretation, which reflects a variety of scenarios where the expected output of the algorithm should relate to the {\em immediate} configuration of the network ({\it e.g.} several subgroups of robots or drones, each subgroup having a spanning tree for coordination). A particular feature of this type of algorithms is that termination never occurs. More significantly, and perhaps differently to self-stabilization, it may happen that the execution never stabilizes ({\it i.e.}, changes are too frequent to converge to a {\em single} tree per component). This precludes approaches where the computation of a new solution requires the previous computation to have completed, which is an important fact.

The present work is an attempt at understanding what can still be computed (and guaranteed) in terms of spanning trees in such dynamic networks, with no assumptions as to the rate of change, their simultaneity, or global connectivity. In this seemingly chaotic context, we present an algorithm that strives to maintain as few trees per components as possible, while {\em always} guaranteeing key properties.

\subsection{Related work}

Several works have addressed the spanning tree problem in dynamic networks, with different goals and assumptions. Burman and Kutten~\cite{asynchronous} and Kravchik and Kutten~\cite{synchronous} consider a self-stabilizing approach where the legal state corresponds to having a (single) minimum spanning tree and the faults are topological changes. The strategy consists in recomputing the entire tree when a change occurs. This general approach, sometimes called the ``blast away'' approach, is meaningful if stable periods of time exist, which is not the case in (unrestricted) highly dynamic networks.

A number of spanning tree algorithms use random walks for their elegance and simplicity, as well as for their inherent locality. In particular, approaches that involve multiple coalescing random walks allow for uniform initialization (each node starts in the same state) and topology independence (same strategy whatever the graph). Pionneering studies involving such processes include Bar-Ilan and Zernik~\cite{BZ89} (for the problem of election and spanning tree), Israeli and Jalfon~\cite{IJ90} (mutual exclusion), and Chapter 14 of Aldous and Fill~\cite{AF02} (general analysis). 

The principle of using coalescing random walks to build spanning trees in mildly dynamic networks was used by Baala et al.~\cite{mosbah-tree} and Abbas et al.~\cite{Baala03}, where tokens are annexing territories gradually by capturing each other. Regarding dynamicity, both algorithms require the nodes to know (an upper bound on) the cover time of the random walk, in order to regenerate a token if they have not been visited for some time. Besides the strength of this assumption (akin to knowing the number of nodes $n$, or the size of components in our case), the efficiency of the timeout approach decreases dramatically with the rate of topological changes. In particular, if they are more frequent than the cover time (itself in $O(n^3)$), then the tree is constantly fragmented into ``dead'' root-less ({\it i.e.} leader-less) pieces.

Another algorithm based on random walks is proposed by Bernard et al.~\cite{BBS13}. Here, the tree is constantly redefined as the token moves (in a way that reminds the snake game). Since the token moves only over present edges, those edges that have disappeared are naturally cleaned out of the tree as the walk proceeds. Hence, the algorithm can tolerate failure of the tree edges. However it still suffers from detecting the disappearance of tokens using timeouts based on the cover time, which as we have seen, suits only slow dynamics.

A recent work by Awerbuch et al.~\cite{Awerbuch08} addresses the maintenance of {\em minimum} spanning trees in dynamic networks. The paper shows that a solution to the problem can be updated after a topological change using $O(n)$ messages (and same time), while the $O(m)$ messages of the ``blast away'' approach was thought to be optimal. (This demonstrates, incidentally, the revelance of {\em updating} a solution rather than recomputing it from scratch in the case of minimum spanning trees.) The algorithm has good properties for highly dynamic networks. For instance, it considers as natural the fact that components may split or merge perpetually. Furthermore, it tolerates new topological events while an ongoing update operation is executing. In this case, update operations are enqueued and consistently executed one after the other. While this mechanism allows for an arbitrary number of topological events {\em at times}, it still requires that such burst of changes are only episodical and that the network remains eventually stable for (at least) a linear amount of time in the number of nodes, in order for the update operations to complete and thus the logical tree to be consistent with physical reality.

All the aforementioned algorithms either assume that {\em global update} operations (e.g. wave mechanisms) can be performed regularly, or that some node can collect {\em global information} about the tree structure. As far as dynamics is concerned, this forbids arbitrary and ever going changes to occur in the network. 
 

\subsection{A high-level (graph-level) mechanism.}
\label{sec:principle}


A high-level graph scheme was proposed in~\cite{CCGP13} for the maintenance of a spanning forest (not necessarily minimum) in unrestricted dynamic networks, using a coarse grain interaction model inspired from graph relabeling systems~\cite{GRS01} (bearing some common traits with so-called population protocols~\cite{AAD+06}). It can be described informally as follows. Initially every node hosts a token and is the {\em root} of its own individual tree. Whenever two roots/tokens are located at both {\em endpoints} of a same edge (see merging rule on Figure~\ref{fig:scheme}), one of them is destroyed and the underlying node selects the other as parent: both trees (of arbitrary size) are merged locally and instantly. In absence of merging opportunity, the tokens execute a random walk within their own tree in the hope for (farther) merging opportunities (see circulation rule on Figure~\ref{fig:scheme}). As they circulate, the tokens flip (again, locally) the parent-child relations so that a directed path from any node in the tree towards its root is maintained. The fact that the random walk takes place {\em within the tree} (as opposed to the whole network) is crucial for this property. In fact, this simple feature is what enables to recover a consistent state immediately after an edge of the tree has disappeared. Indeed, it suffices for the child side of the lost edge to regenerate a new token/root, while being safe that no other node in the tree can do so (see reparation rule on Figure~\ref{fig:scheme}). In conclusion, this scheme allows for {\em all} operations to be handled in a purely localized fashion (let apart global convergence).

\begin{figure}[h]
  \centering
  \subfigure[Merging rule]{
    \begin{tikzpicture}[scale=.8]
      \clip (-2,-1.2) rectangle (1,.2);
      \tikzstyle{every node}=[draw,circle,fill=black!80,inner sep=1.8pt]
      \path (-1,0) node (v11) {};
      \path (.1,0) node (v12) {};
      \path (-1,-1) node (v21) {};
      \path (.1,-1) node[fill=white] (v22) {};
      \draw[dashed] (v11)-- coordinate[midway](mid1) (v12);
      \draw[very thick,<-] (v21)-- coordinate[midway](mid2)(v22);
      \draw[thick, gray!70, shorten >=6pt, shorten <=6pt] (mid1) edge[->] (mid2);
      \tikzstyle{every node}=[font=\footnotesize,above=2pt]
    \end{tikzpicture}
  }\hspace{1cm}
  \subfigure[Circulation rule]{
    \begin{tikzpicture}[scale=.8]
      \clip (-2.3,-1.2) rectangle (1.3,.2);
      \tikzstyle{every node}=[draw,circle,fill=black!80,inner sep=1.8pt]
      \path (-1,0) node (v11) {};
      \path (.1,0) node[fill=white] (v12) {};
      \path (-1,-1) node[fill=white] (v21) {};
      \path (.1,-1) node (v22) {};
      \draw[very thick,<-] (v11)--(v12);
      \draw[very thick,->] (v21)--(v22);
      \draw[thick, gray!70, shorten >=6pt, shorten <=6pt] (mid1) edge[->] (mid2);
      \tikzstyle{every node}=[font=\footnotesize,above=2pt]
    \end{tikzpicture}
  }\hspace{1cm}
  \subfigure[Regeneration rule]{
    \begin{tikzpicture}[scale=.8]
      \clip (-3.3,-1.2) rectangle (.3,.2);
      \tikzstyle{every node}=[draw,circle,fill=black!80,inner sep=1.8pt]
      \path (-2,0) node[fill=white] (v11) {};
      \path (-2,-1) node (v21) {};
      \tikzstyle{every node}=[]
      \path (-.9,0) node[gray] (v12) {\hspace{-1.1cm} \LARGE $\times$};
      \draw[very thick, ->] (v11)--(v12);
      \draw[thick, gray!70, ->, shorten >=4pt, shorten <=4pt] (v11)--(v21);
    \end{tikzpicture}
  }
  \caption{\label{fig:scheme} Spanning forest principle (high-level representation). {\it Black nodes are those having a token. Black directed edges denote child-to-parent relationships. Gray vertical arrows represent transitions.}}
\end{figure}
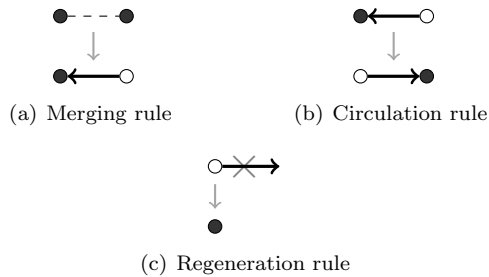

More precisely, at a graph level, this scheme guarantees that the network remains covered by a spanning forest at any time, in which 1) no cycle ever appears, 2) maximal subtrees always are directed rooted trees (with a token at the root), and 3) every node always belongs to such a tree, and so, whatever the rate and scale of topological changes. As to performance, analyzing it requires first to define what metric is relevant in this context. It is not expected that the rate of changes allows {\em any} algorithm to converge towards a single tree per connected component (which is, in a sense, the optimal state in such problem). Before such concerns, a more important question remained to be answered as to whether such a mechanism could be implemented in more conventional message passing models.

\subsection{Our contribution.}

We present the first adaptation of the above mechanism into the synchronous message passing model from~\cite{KLO10}. Due to the loss of atomicity (in particular, the loss of exclusivity) in the interaction, the algorithm turns out to be much more sophisticated than its graph-level counterpart. While still reflecting the same abstract principle, it faces problems that require conceptual differences. In particular, the original model prevented (conveniently) a node to select a parent at the same time as it is itself selected as parent by another node, thereby making cycle avoidance straightforward. One of the ingredients in the new algorithm to circumvent this type of problem is an original technique (referred to as the {\em unique score} technique) that consists of maintaining, network-wide, a set of {\tt score} variables that always remain a permutation of the set of nodes IDs. This mechanism allows us to break symmetry and avoid the formation of cycles in a context where IDs alone did not suffice. (We believe this technique is of independent interest.)
The paper is organized as follows. In Section~\ref{sec:model}, we present the synchronous message passing model from~\cite{KLO10}, slightly adapted (in an equivalent way) and notations that we use throughout the paper. Then, Section~\ref{sec:algorithm} presents the algorithm, whose correctness analysis is developped through Section~\ref{sec:correctness}. Finally, Section~\ref{sec:simulation} discussed some aspects regarding to performance, which includes preliminary experimental results, which can be seen as (partial) practical validation of our algorithm.

\section{Model and notations}
\label{sec:model}

The network is represented by a graph sequence $\G=(G_1,G_2,\dots)$, such that $G_i=(V,E_i)$, where $V$ is a fixed set of vertices and $E_i$ is a dynamically changing set of undirected edges. 
Following Kuhn et al.~\cite{KLO10}, we consider a synchronous (rounded) computational model, where in each round $i$, the set of edges $E_i$ determines what nodes communicate. At the beginning of each round, each node sends a message that was prepared at the end of the previous round. This message is sent to all its neighbors in $E_i$, although the list of these neighbors is a priori {\em unknown} to the node. Then, it receives all messages sent by its neighbors (in the same round), and finally computes its new state and its message for the next round. Due to the reciprocity of undirected links, a node {\em can} determine upon reception which nodes have received its own message. In summary, each round corresponds to three phases {\tt (send, receive, compute)}, which corresponds to a rotation of the original model of~\cite{KLO10} where the phases are {\tt (compute, send, receive)}. This adaptation is not necessary, but makes the expression of our algorithm and of its correctness simpler. In particular, correctness predicates are satisfied by the {\em end} of each round (as opposed to the middle of each round if the original model had been used).


The nodes possess unique identifiers taken from a totally ordered set; that is, for any two nodes $u$ and $v$, it either holds that $ID(u)>ID(v)$ or $ID(v)>ID(u)$. A node can specify what neighbor its message is intended to (although all neighbors will receive it) by setting the {\tt target} field of that message. Symmetrically, the $ID$ of the emitter of a message can be read in the {\tt sender} field of that message. Since the edges are undirected, if $u$ receives a message from $v$ at round $i$, then $v$ also receives a message from $u$ at round $i$. We call this property the principle of \textit{reciprocity}.

Globally, the progress of the execution is represented as a sequence of {\em configurations} $(C_0, C_1, C_2, ... , C_i)$, where each $C_i$ captures the state of all nodes at the end of round $i$  (except for $C_0$, the initial state). We use interchangeably the terminology ``after round $i$'' and ``at/by the end of round $i$'', and similarly for ``before round $i$'' and ``at the beginning of round $i$''.

\section{The Algorithm}
\label{sec:algorithm}

In this section, we present a message passing algorithm which adapts (``implements'' in the theoretical sense) the spanning forest mechanism described in Section~\ref{sec:principle} into the synchronous model from~\cite{KLO10}. We first describe the variables present at each node, then the structure of a message, and finally the algorithm itself with both an informal description and detailed listings of pseudo-code.

\subsection{State variables}
\label{sec:variables}

Besides the {\tt ID} variable, which we assume is externally initialized, each node has a set of variables which reflect its situation in the tree:
\texttt{status} accounts for the possession of a token ({\tt T} if it has a token, {\tt N} if it does not);
\texttt{parent} contains the {\tt ID} of this node's parent ($\bot$ if it has none); \texttt{children} contains the set of this node's children ($\emptyset$ if it has none).
Observe that both variables {\tt status} and {\tt parent} are somewhat redundant, since in the spanning forest principle (see Section~\ref{sec:principle}) the possession of a token is equivalent to being a root. Our algorithm enforces this equivalence, yet, keeping both variables separated simplifies the description of the algorithm and our ability to think of it intuitively.
Variable {\tt neighbors} contains the set of nodes from which a message was received in the last reception. These neighbors may or may not belong to the same tree as the current node. Variable {\tt contender} contains the {\tt ID} of a neighbor that the current node considers selecting as parent in the next round (or $\bot$ if there is no such node). Finally, the variable {\tt score} is the main ingredient of our cycle-avoidance mechanism, whose role is described below. 

\subsubsection{Initial values:} All the nodes are uniformly initialized. They are initially the root of their own individual tree ({\it i.e.} $\texttt{status}=T$, $\texttt{parent}=\bot$, and ${\tt children}=\emptyset$). They know none of their neighbors (${\tt neighbors}=\emptyset$), have no contenders (${\tt contender}=\bot$), and their {\tt score} is set to their own {\tt ID}.

\subsection{Structure of a message (and associated variables)}

Messages are composed of a number of fields: {\tt sender} is the ID of
the sending node; {\tt senderStatus} its status (either {\tt T} or
{\tt N}); and {\tt score} its score when the message was prepared. The
field {\tt action} is one of $\{FLIP,SELECT,HELLO\}$. Informally,
$SELECT$ messages are sent by a root node to another root node to
signify that it ``adopts'' it as a parent (merging operation); $FLIP$
messages are sent by a root node to circulate the token to one of its
children (circulation operation); $HELLO$ messages are sent by a node
by default, when none of the other messages are sent, to make its
presence and status known by its neighbors. Finally, {\tt target} is
the ID of the neighbor to which a FLIP or a SELECT message are
intended ($\bot$ for HELLO messages).

Received messages are stored in a variable \texttt{mailbox}, which is
a map collection whose {\em keys} are the senders ID ({\it i.e.,} a
message whose sender ID is $u$ can be accessed as {\tt mailbox[$u$]}).
In each round, the algorithm makes use of a {\tt RECEIVE()} function
that clears the mailbox and fill it with all the messages received in
that round (one for each physical neighbor). A node can thus update
the set of its neighbors by fetching the {\em keys} of its mailbox.
Similarly, it can eliminate from its list of children those nodes which
are no more neighbor.

As mentioned above, every node prepares at the end of a round the message to
be sent at the beginning of the next round. This message is stored in
a variable \texttt{outMessage}. We allow the short hand $\texttt{m} \gets (a, b, c, d, e)$ to define a new message $m$ whose emitter is node $a$ (with status $b$ and score $e$); target is node $d$; and action is $c$.

\subsubsection{Initial values:} The mailbox is initially empty ({\tt mailbox} $=\emptyset$) and \texttt{outMessage} is initialized to the tuple $(ID, T, HELLO, \bot, ID)$.

\subsection{Description of the algorithm}

The algorithm implements the general scheme presented in
Section~\ref{sec:principle}. In this Section we explain how each of
the three core operations ({\em merging}, {\em circulation}, {\em
  regeneration}) is implemented. Then we discuss the specificities of
the merging operation in more detail and the problems that arise due
to its entanglement with the circulation operation, a fact due to the
loss of atomicity in the message passing model. The resulting solution
is substantially more sophisticated than its original scheme, and yet
it faithfully reflects the same high-level principle. Let us start
with some generalities. In each round, each node broadcasts to its
neighbors a message containing, among others, its status ({\tt T} or
{\tt N}) and an action (SELECT, FLIP, or HELLO). Whether or not the
message is intended to a specific $target$ (which is the case for
SELECT and FLIP messages), all the nodes who receive it can possibly
use this information for their own decisions. More generally, based on
the received information and the local state, each node computes at
the end of the round its new status and the local structure of its
tree (variables \texttt{children} and \texttt{parent}), then it
prepares the next message to be sent. We now describe the three
operations. Throughout the explanations, the reader is invited to
refer to Figure~\ref{fig:example}, where an example of execution
involving all of them is shown. All details are also given in the
listings of Algorithm~\ref{algo:main} and~\ref{algo:functions}.

\begin{figure*}[h]
  \centering
  \subfigure[Round $1$]{
    \begin{tikzpicture}[yscale=1.1,xscale=1]
      \vertex[fill] (4) at (1,1) [label=above:$4$] {};
      \vertex[fill] (1) at (0,1) [label=above:$1$] {};
      \vertex[fill] (2) at (1,0) [label=below:$2$] {};
      \vertex[fill] (3) at (0.3,0.3) [label=below:$3$] {};
      \vertex[fill] (6) at (2,0.5) [label=above:$8$] {};
      \vertex[fill] (5) at (2,-0.7) [label=below:$5$] {};
      \vertex[fill] (7) at (2.5,-1.5) [label=below:$6$] {};
      \vertex[fill] (8) at (1.5,-1.2) [label=below:$7$] {};
      \path[gray]
      (1) edge node[auto] {$s \rightarrow$} (4)  
      (6) edge node[sloped,above, midway] {$s \rightarrow$} (4)
      (2) edge node[sloped,below, midway] {$s \rightarrow$} (6)
      (2) edge (3)
      (4) edge node[sloped,below, midway] {$s \rightarrow$} (3)
      (1) edge (3)
      (4) edge (2)
      (5) edge (7)
      (8) edge node[sloped,below, midway] {$\leftarrow s$} (7)
      (8) edge node[sloped,above, midway] {$\leftarrow s$} (5)
      ;
    \end{tikzpicture}
  }\hspace{40pt}
  \subfigure[Round $2$]{
    \begin{tikzpicture}[yscale=1.1,xscale=1]
      \vertex[] (4) at (1,1) [label=above:$4$] {};
      \vertex[] (1) at (0,1) [label=above:$1$] {};
      \vertex[] (2) at (1,0) [label=below:$2$] {};
      \vertex[] (3) at (0.3,0.3) [label=below:$3$] {};
      \vertex[fill] (6) at (2,0.5) [label=above:$8$] {};
      \vertex[] (5) at (2,-0.3) [label=below right:$5$] {};
      \vertex[] (7) at (2.5,-1.5) [label=below:$6$] {};
      \vertex[fill] (8) at (1.5,-1.2) [label=below:$7$] {};
      \path[gray]
      (1) edge (4)
      (6) edge (4)
      (2) edge node[sloped,above, midway] {$\leftarrow f$ } (6)
      (2) edge (3)
      (4) edge (3)
      (1) edge (3)
      (4) edge (2)
      (5) edge (7)
      (8) edge (7)
      (8) edge  node[sloped,above, midway] {$f \rightarrow$ }  (5)
      (5) edge node[sloped,above, midway] {}   (6)
      (5) edge (2)
      ;

      \draw[very thick,->] (4) -- (6);
      \draw[very thick,->](3) -- (4);
      \draw[very thick,->](1) -- (4);
      \draw[very thick,->](2) -- (6);
      \draw[very thick,->](5) -- (8);
      \draw[very thick,->](7) -- (8);
    \end{tikzpicture}
  }\hspace{40pt}
  \subfigure[Round $3$]{
    \begin{tikzpicture}[yscale=1.1,xscale=1]
      \vertex[] (4) at (1,1) [label=above:$4$] {};
      \vertex[] (1) at (0,1) [label=above:$1$] {};
      \vertex[fill] (2) at (1,0) [label=below:$8$] {};
      \vertex[] (3) at (0.3,0.3) [label=below:$3$] {};
      \vertex[] (6) at (2,0.5) [label=above:$2$] {};
      \vertex[fill] (5) at (2,-0.3) [label=below right:$7$] {};
      \vertex[] (7) at (2.5,-1.5) [label=below:$6$] {};
      \vertex[] (8) at (1.5,-1.2) [label=below:$5$] {};
      \path[gray]
      (1) edge (4)
      (6) edge (4)
      (2) edge node[sloped,above, midway] {$f \rightarrow$ } (6)
      (2) edge (3)
      (4) edge (3)
      (1) edge (3)
      (4) edge (2)
      (5) edge (7)
      (8) edge (7)
      (8) edge (5)
      (6) edge node[sloped,below, midway] {$s \rightarrow$ } (5)
      (2) edge (5)
      ;
      
      \draw[very thick,->] (4) -- (6);
      \draw[very thick,->](3) -- (4);
      \draw[very thick,->](1) -- (4);
      \draw[very thick,<-](2) -- (6);
      \draw[very thick,<-](5) -- (8);
      \draw[very thick,->](7) -- (8);
    \end{tikzpicture}
  }

  \subfigure[Round $4$]{
    \begin{tikzpicture}[yscale=1.1,xscale=1]
      \vertex[] (4) at (0.6,1.3) [label=above:$4$] {};
      \vertex[] (1) at (-0.5,1.1) [label=above:$1$] {};
      \vertex[] (2) at (1.2,0) [label=below:$2$] {};
      \vertex[] (3) at (0.3,0.3) [label=below:$3$] {};
      \vertex[fill] (6) at (2,0.5) [label=above:$8$] {};
      \vertex[] (5) at (2,-0.3) [label=below right:$7$] {};
      \vertex[] (7) at (2.5,-1.5) [label=below:$6$] {};
      \vertex[] (8) at (1.5,-1.2) [label=below:$5$] {};
      \path[gray]
      (1) edge (4)
      (2) edge (6)
      (2) edge (3)
      (4) edge (3)
      (1) edge (3)
      (5) edge (7)
      (8) edge (7)
      (8) edge (5)
      (6) edge node[sloped,below, xshift=-2pt,midway] {$\leftarrow f$ }  (5)
      (2) edge (5)
      ;
      \draw[dashed,-] (4) --  (2);
      \draw[very thick, dashed,->] (4) -- node[sloped,above, midway] {$\times \leftarrow$ }   (6);
      \draw[very thick,->] (5) -- (6);
      \draw[very thick,->](3) -- (4);
      \draw[very thick,->](1) -- (4);
      \draw[very thick,->](2) -- (6);
      \draw[very thick,<-](5) -- (8);
      \draw[very thick,->](7) -- (8);
    \end{tikzpicture}
  }\hspace{25pt}
  \subfigure[Round $5$]{
    \begin{tikzpicture}[yscale=1.1,xscale=1]
      \vertex[fill] (4) at (0.6,1.3) [label=above:$4$] {};
      \vertex[] (1) at (-0.5,1) [label=above:$1$] {};
      \vertex[] (2) at (1,0) [label=below:$2$] {};
      \vertex[] (3) at (0.3,0.3) [label=below:$3$] {};
      \vertex[] (6) at (2,0.5) [label=above:$7$] {};
      \vertex[fill] (5) at (2,-0.3) [label=below right:$8$] {};
      \vertex[] (7) at (2.5,-1.5) [label=below:$6$] {};
      \vertex[] (8) at (1.5,-1.2) [label=below:$5$] {};
      \path[gray]
      (1) edge node[sloped,above, midway] {$\leftarrow f$ }   (4)
      (2) edge (6)
      (2) edge (3)
      (4) edge (3)
      (1) edge (3)
      (5) edge (7)
      (8) edge (7)
      (8) edge node[sloped,above, midway] {$\leftarrow f$ }   (5)
      (6) edge (5)
      (2) edge (5)
      ;
      
      \draw[very thick,<-] (5) -- (6);
      \draw[very thick,->](3) -- (4);
      \draw[very thick,->](1) -- (4);
      \draw[very thick,->](2) -- (6);
      \draw[very thick,<-](5) -- (8);
      \draw[very thick,->](7) -- (8);
    \end{tikzpicture}
  }\hspace{25pt}
  \subfigure[Round $6$]{
    \begin{tikzpicture}[yscale=1.1,xscale=1]
      \vertex[] (4) at (0.6,1.3) [label=above:$1$] {};
      \vertex[fill] (1) at (-0.5,1) [label=above:$4$] {};
      \vertex[] (2) at (1,0) [label=below:$2$] {};
      \vertex[] (3) at (0.3,0.3) [label=below:$3$] {};
      \vertex[] (6) at (2,0.5) [label=above:$7$] {};
      \vertex[] (5) at (2,-0.3) [label=below right:$8$] {};
      \vertex[] (7) at (2.5,-1.5) [label=below:$6$] {};
      \vertex[fill] (8) at (1.5,-1.2) [label=below:$5$] {};
      \path[gray]
      (1) edge node[sloped,above, midway] {$f \rightarrow$ } (4)
      (2) edge (6)
      (2) edge (3)
      (4) edge (3)
      (1) edge (3)
      (5) edge (7)
      (8) edge (7)
      (8) edge node[sloped,above, midway] {$f \rightarrow$ } (5)
      (6) edge (5)
      (2) edge (5)
      ;
      
      \draw[very thick,<-] (5) -- (6);
      \draw[very thick,->](3) -- (4);
      \draw[very thick,<-](1) -- (4);
      \draw[very thick,->](2) -- (6);
      \draw[very thick,->](5) -- (8);
      \draw[very thick,->](7) -- (8);
    \end{tikzpicture}
  }

  \caption{Example of execution of the algorithm which illustrates all types of operations: parent selection ($s\rightarrow$), token circulation ($f\rightarrow$), and tree disconnection ($\times\leftarrow$). {\it The first two symbols represent FLIP or SELECT messages to be sent in the \underline{next} round. Black (resp. white) nodes are those (not) having a token at the \underline{beginning} of the round. Tree edges are represented by bold directed edges. Dash edges have just disappeared.}}
  \label{fig:example}
\end{figure*}
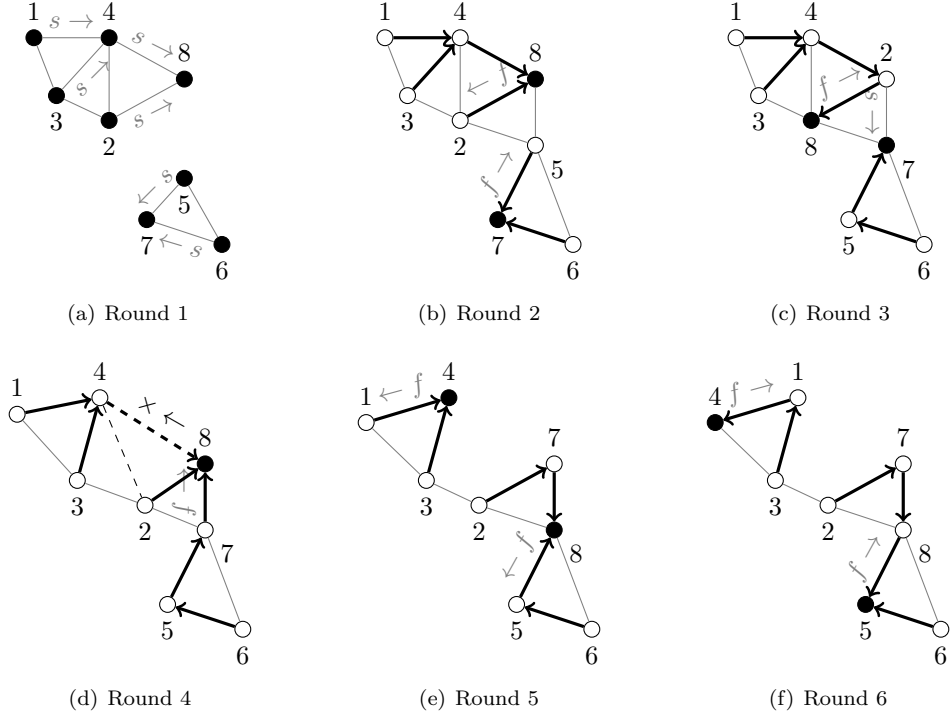

\subsubsection{Merging:} 
If a root ({\it i.e.} a node having a token), say $v$, detects the
existence of a neighbor root with higher {\tt score} than its own,
then it considers that node as a possible \texttt{contender}, {\it
  i.e.} as a node that it might select as a parent in the next round.
If several such roots exist, then the one with highest score, say $u$,
is chosen. At the beginning of the next round, $v$ sends a $SELECT$
message to $u$ to inform it that it is its new parent. Two cases are
possible: either the considered edge is still present in that round,
or it disappeared in-between both rounds. If it is still present, then
$u$ receives the message and adds $v$ to its children list, among others
(Line~\ref{line:addChild}). As for $v$, it sets its {\tt parent}
variable to $u$ and its {\tt status} to N
(Lines~\ref{line:testadoptparent} and~\ref{line:becomeOrdinary}). If
the edge disappeared, then $u$ does not receive the message, which is
lost. However, due to the reciprocity of message exchange, $v$ does
not receive a message from $u$ either and thus simply does not
executes the corresponding changes. By the end of the round, either
the trees are properly merged, or they are properly separated.

\subsubsection{Circulation:} If a root $v$ does not detect another
root with higher score, then it selects one of its children at random,
if it has any (see Line~\ref{line:prepareFlip}), otherwise it simply
remains root. Randomness is not a strict requirement of our algorithm
and replacing it with any deterministic strategy would not affect
correctness of the algorithm. Once the child is chosen, say $u$, the
root prepares a FLIP message intended to $u$, and sends it at the
beginning of the next round. Two cases are again possible, whether or
not the edge $\{u,v\}$ is still present in that round. If it is still
present, then $u$ receives the message, it updates its status and adds
$v$ to its children list, among others (Lines~\ref{line:becomeRoot}
and Line~\ref{line:addChild}). As for $v$, it sets its {\tt parent}
variable to $u$ and its {\tt status} to N
(Lines~\ref{line:testadoptparent} and~\ref{line:becomeOrdinary}). If
the edge disappeared, then $v$ can detect it as before simply does not
executes the corresponding changes. Node $u$, on the other hand,
detects that the edge leading to its current parent disappeared, thus
it regenerates a token (discussed next). Notice that in the absence of
a merging opportunity, a node receiving the token in round $i$ will
immediately prepare a FLIP message to circulate the token in the next
round. Unless the tree is composed of a single node, the tokens are
thus moved in each round. In order for them to remain detectable in
this case, the status announced in $FLIP$ messages is {\tt T} (whereas
it is {\tt N} for $SELECT$ messages).

\SetKwRepeat{Repeat}{repeat}{}%
\begin{algorithm2e*}[h]
  
    \BlankLine
    \medskip

  \Repeat{}{
  \texttt{SEND(outMessage);}
  \BlankLine

  \texttt{mailbox $\gets$ RECEIVE();}\hfill\comm{Received messages, indexed by sender ID}
  \BlankLine

  \texttt{neighbors $\gets$ mailbox.keys();}\hfill\comm{All the senders IDs} \label{line:neighbors}
  \texttt{children} $\gets$ \texttt{children} $\cap$ \texttt{neighbors}\medskip\\   \label{line:intersection}

    \BlankLine
    
    \comm{Regenerates a token if parent link is lost}\smallskip
    \If{\texttt{status}=N $\wedge$ \texttt{parent} $\not \in$ \texttt{neighbors}}{
      \texttt{BECOME\_ROOT()};\label{line:become}\\
    }
    \BlankLine
    \comm{Checks if the outgoing FLIP or SELECT (if any) was successful}\smallskip
    
    \If{\texttt{outMessage.action $\in$ \{FLIP,SELECT\} $\wedge$ outMessage.target $\in$ neighbors\label{line:testadoptparent}}}{
      \texttt{ADOPT\_PARENT(outMessage)}\label{line:becomeOrdinary}
    }
    \BlankLine

    \comm{Processes the received messages}\smallskip    
    
    \texttt{contender} $\gets \bot$;\\
    \texttt{contenderScore} $\gets 0$;\\
    \ForAll{\texttt{message} $\in$ \texttt{mailbox}}{
      \eIf{\texttt{message.target} = \texttt{ID}}{
        \If{\texttt{message.action} = FLIP}{
          \texttt{BECOME\_ROOT();}\label{line:becomeRoot}
        }
        \texttt{ADOPT\_CHILD(message);}\label{line:addChild}\hfill\comm{called for both FLIP or SELECT} 
      }{
        \If{\texttt{message.status\,=\,T $\wedge$ message.score $>$\,contenderScore}\label{line:mergingDetection}}{
          \texttt{contender $\gets$ message.ID;}\\
          \texttt{contenderScore $\gets$ message.score;}
        }
      }
    }

    \comm{Prepares the message to be sent}\smallskip
    \texttt{outMessage $\gets \bot$}\\
    \If{\texttt{status} $=$ T}{
      \eIf{\texttt{contenderScore $>$ score}}{
        \texttt{PREPARE\_MESSAGE(SELECT, contender)}; \label{line:prepareSelect}
      }{
        \If{\texttt{children} $\ne \emptyset$}{
          \texttt{PREPARE\_MESSAGE(FLIP, random(children))};\label{line:prepareFlip}
        }
      }
    }
    
    \If{\texttt{outMessage} $= \bot$}{
      \texttt{PREPARE\_MESSAGE(HELLO, $\bot$)}
    }
  }
  \caption{\label{algo:main}Main Algorithm}
\end{algorithm2e*}



\subsubsection{Regeneration:} The first thing a non-root node does after receiving the messages of the current round is to check whether the edge leading to its current parent is still present. If the edge disappeared, then the node regenerates a root directly (Line~\ref{line:become}). A nice property of the spanning forest principle is that this cannot happen twice in the same tree. And if a tree is broken into several pieces simultaneously, then each of the resulting subtree will have exactly one node performing this operation.

\subsubsection{The unique score technique:}
Unlike the high-level graph model from~\cite{CCGP13}, in which the merging operation involved two nodes in an {\em exclusive} way, the non-atomic nature of message passing allows for a {\em chain} of selection that may involve an arbitrary long sequence of nodes (e.g. $a$ selects $b$, $b$ selects $c$, and so on). This has both advantages and drawbacks. On the good side, it makes the initial merging process very fast (see rounds $1$ and $2$ in Figure~\ref{fig:example} to get an example). On the bad side, it is the reason why scores need to be introduced to avoid cycles. Indeed, relying only on a mere comparison of {\tt ID} to avoid cycles is not sufficient. Consider a chain of selection in round $i$ that ends up at some root node $u$. Nothing prevents $u$ to have passed the token to a lower-ID child, say $v$, in the previous round $i-1$ (that same round when $u$'s status $T$ was overheard by the next-to-last root in the chain). Now, nothing again prevents $v$ to have selected one of the nodes in the selection chain in round $i$, thereby creating a cycle. The score mechanism prevents such a situation by enforcing that after each FLIP, the new root has a larger score than its predecessor (see Lines~\ref{line:min-score} and~\ref{line:max-score} in Algorithm~\ref{algo:functions}). The score mechanism also guarantees that the current set of scores (network-wide) is always a permutation of the initial set of scores. Hence, scores are always unique. All of these elements are crucial ingredients in the proofs of correctness of Section~\ref{sec:correctness}.



\section{Correctness analysis}
\label{sec:correctness}

This section establishes a number of key properties about the spanning forest algorithm. In particular, we prove the claims regarding what property is {\em always} satisfied, regardless of the rate of changes. Because the proofs are technical, we provide first a preamble that includes helping definitions and a less technical outline of the proof. Then, the proof is described through two main parts called {\em consistency} and {\em correctness}, in reference to aspects defined in the preamble.

\subsection{Preamble and outline of the proof}
We first define a handful of instrumental concepts that help minimize the number of properties to be proven. Then, as we start formulating the key properties to be proved, we adopt concise notations regarding the state of the system. Precisely, we denote by $\uim.varname$ (resp. $\uip.varname$) the value of variable $varname$ at node $u$ before (resp. after) 
round $i$. Notice that for any node $u$, round $i$, and variable $varname$, we have $\uip.varname=\uipm.varname$. 
We use whichever notation is the most convenient in the given context.

\subsubsection{Helping definitions}

These definitions are not specific to our algorithm, they are general graph concepts that simplify the subsequent proofs.

\begin{definition}[Pseudotree and pseudoforest]
\label{def:pseudoforest}
A directed graph whose vertices have outdegree at most 1 is a {\em pseudoforest}.
  A vertex whose outdegree is 0 is called a {\em root}.
  The weakly connected components of a pseudoforest are called {\em pseudotrees}.
\end{definition}

\begin{lemma}
\label{lem:one_root}
  A pseudotree has at most one root.
\end{lemma}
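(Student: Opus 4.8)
The plan is to prove the statement by a double counting of edges, comparing a lower bound coming from weak connectivity against an upper bound coming from the outdegree constraint. Let the pseudotree $P$ have $n$ vertices, and suppose it contains $r$ roots; the goal is to show $r \leq 1$.

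First I would count the (directed) edges of $P$ via the outdegrees. By Definition~\ref{def:pseudoforest}, every vertex of a pseudoforest has outdegree at most $1$, and a vertex is a root precisely when its outdegree is $0$; hence every non-root has outdegree \emph{exactly} $1$. Summing outdegrees over all vertices therefore gives $|E(P)| = (n-r)\cdot 1 + r\cdot 0 = n - r$. Next I would use weak connectivity to obtain a lower bound: since $P$ is a weakly connected component, its underlying undirected graph is connected on $n$ vertices, and a connected undirected graph on $n$ vertices has at least $n-1$ edges. As each such undirected edge arises from at least one directed edge of $P$, this yields $|E(P)| \geq n - 1$. Combining the two gives $n - 1 \leq n - r$, i.e. $r \leq 1$, which is exactly the claim.

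The argument is short, and the step requiring the most care is the edge count by outdegree: one must observe that a non-root has outdegree not merely at most $1$ but exactly $1$, so that $|E(P)| = n-r$ holds with equality rather than as a mere inequality — this is what forces the lower and upper bounds to pinch. I would also note that the conclusion $r \leq 1$ is tight and consistent with the degenerate cases the definition permits: a pseudotree consisting of a single directed cycle has no root ($r = 0$), while a tree oriented toward a single sink has exactly one ($r = 1$). The main obstacle is thus conceptual rather than computational, namely correctly pinning down the outdegree of non-root vertices and relating the directed edge count to the undirected connectivity bound.
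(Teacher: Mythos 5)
Your proof is correct and uses essentially the same double-counting argument as the paper: the paper bounds $\vert E_\T\vert \leq \vert V_\T\vert - 2$ under the assumption of two roots and contrasts this with the connectivity bound $\vert E_\T\vert \geq \vert V_\T\vert - 1$, while you compute $\vert E(P)\vert = n-r$ exactly and compare it to the same lower bound. The only difference is cosmetic (direct derivation of $r\leq 1$ versus contradiction), so nothing further is needed.
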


\begin{proof}
   By definition, a pseudotree $\T=(V_\T, E_\T)$ is connected, thus $\vert E_\T \vert \geq \vert V_\T \vert - 1$. If $\T$ has several roots, then at least two nodes in $V_\T$ have no outgoing edge. Since the others have at most one, we must have $\vert E_\T \vert \leq \vert V_\T \vert - 2$, which is a contradiction.
 \end{proof}

\begin{lemma}
\label{lem:pseudotree_tree}
If a pseudotree $\T$ contains a root $r$, then it has no cycle.
\end{lemma}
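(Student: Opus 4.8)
The plan is to reuse the edge-counting technique already used in the proof of Lemma~\ref{lem:one_root}. Writing $\T=(V_\T,E_\T)$, I would first bound $\vert E_\T\vert$ from both sides. On one hand, $\T$ is weakly connected (it is a pseudotree), so its underlying undirected graph is connected and therefore $\vert E_\T\vert \geq \vert V_\T\vert - 1$. On the other hand, $\vert E_\T\vert$ equals the sum of the outdegrees of the vertices of $\T$; since every vertex has outdegree at most $1$ and the root $r$ has outdegree $0$, this sum is at most $\vert V_\T\vert - 1$. Combining the two bounds forces $\vert E_\T\vert = \vert V_\T\vert - 1$.

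At this point the second step is to conclude acyclicity. A connected undirected graph on $\vert V_\T\vert$ vertices with exactly $\vert V_\T\vert - 1$ edges is a tree, hence acyclic; in particular $\T$ contains no cycle, directed or undirected, which is the conclusion sought.

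The only mildly delicate point --- and what I expect to be the main obstacle --- is making sure the notion of ``cycle'' being ruled out is the relevant one. A cycle in $\T$ could a priori mean a directed cycle rather than a cycle of the underlying undirected graph. The counting argument above rules out undirected cycles, which is strictly stronger, so it covers the directed case for free; alternatively, one can argue directly that in a pseudoforest any undirected cycle is necessarily a directed one (each vertex of an undirected cycle already spends its single available outgoing edge on a cycle edge, leaving none for a root), and that a directed cycle therefore contains no root, contradicting weak connectivity with $r$. Either way the key inequality $\vert E_\T\vert \leq \vert V_\T\vert - 1$, which comes precisely from the presence of the root, is what does the work; I would present the counting version for uniformity with Lemma~\ref{lem:one_root}.
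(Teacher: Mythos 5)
Your proof is correct, but it takes a different route from the paper's. The paper argues structurally: it partitions the vertices of $\T$ into BFS layers $V_0=\{r\}, V_1, V_2,\dots$ by distance to the root and shows inductively that every node in $V_j$ spends its unique outgoing edge on a node of $V_{j-1}$; since every edge strictly decreases the distance to $r$, no cycle can close up. You instead recycle the counting argument of Lemma~\ref{lem:one_root}: weak connectivity gives $\vert E_\T\vert \ge \vert V_\T\vert - 1$, while summing outdegrees (all at most $1$, with $r$ contributing $0$) gives $\vert E_\T\vert \le \vert V_\T\vert - 1$, so the underlying undirected graph is a tree and is acyclic. Both arguments are sound. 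Your version is arguably tidier in that it is uniform with Lemma~\ref{lem:one_root} and rules out undirected cycles outright, which is formally stronger (though, as you note, in a pseudoforest any undirected cycle is forced to be a directed one anyway, so the two notions coincide here). The paper's layering argument buys a little extra structural information that the counting does not: it exhibits the tree as rooted at $r$ with every edge oriented toward the root, which matches the intended parent-pointer semantics used later. One very minor point worth a sentence if you write this up: when passing from the directed edge set to the underlying undirected graph you should observe that distinct directed edges yield distinct undirected edges (no antiparallel pair survives the count, since the two bounds pinch $\vert E_\T\vert$ to exactly $\vert V_\T\vert-1$), so the connected undirected graph really has $\vert V_\T\vert - 1$ edges and is a tree.
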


\begin{proof} Let $V_1 \subset \T$ be the set of nodes at distance $1$ from $V_0=\{r\}$. Since $r$ has outdegree $0$, there is an edge from each node in $V_1$ to $r$. Since $\T$ is a pseudotree, these nodes have no other outgoing edge than those ending up in $V_{0}$. The same argument can be applied inductively, all nodes at distance $i$ having no other outgoing edges than those ending up in $V_{i-1}$.
 \end{proof}

\begin{definition}[Correct tree and correct forest]
\label{def:forest}
At the light of Lemma~\ref{lem:one_root} and~\ref{lem:pseudotree_tree}, we define a {\em correct tree} (or simply a {\em tree}) as a pseudotree in which a root can be found. We naturally define a {\em correct forest} (or simply a {\em forest}) as a pseudoforest whose pseudotrees are trees.

\end{definition}

Finally, because forests are considered in a spanning context, we say that a pseudoforest $\F$ is a correct forest {\em on graph} $G$ iff $\F$ is a correct forest {\em and} $\F$ is a subgraph of $G$. Defining correct trees as pseudotrees in which a root can be found is the key. When the moment arrives, this will allow us to reduce the correctness of our algorithm to the presence of a root in each pseudotree.


\subsubsection{Consistency}

At the end of a round, the state of an edge (whether it belongs to a tree, and if so, in what direction) must be consistently decided at both endpoints:

\begin{definition}[forest consistency]
The configuration $C_i$
is forest consistent if and only if for all nodes $u$,
$\uip.parent = v \Leftrightarrow u \in \vip.children$.
\end{definition}

The proof of forest consistency is inductively established by Theorem~\ref{th:consistency}, based on consistency of the initial configuration (Lemma \ref{lem:C0-forestConsistency}) and the maintenance the consistency over the rounds (Lemma \ref{lem:consistency}).
Forest consistency allows us to reduce the output of interest of the algorithm after each round $i$ to the mere \texttt{parent} variable. 

At the end of round $i$, the values of all \texttt{parent} variables should be consistent with the underlying graph $G_i$.

\begin{definition}[graph consistency]
The configuration $C_i$
is graph consistent if and only if for all nodes $u$,
$\uip.parent = v \Rightarrow \{u,v\} \in E_i$.
\end{definition}

This property is established by Corollary \ref{co:consistency}.
Graph consistency allows us to say that the output of the algorithm forms a pseudoforest on $G_i$.

\begin{definition}[Resulting forest]
  \label{def:F_i}
  Given a round $i\ge 1$, occurring on graph $G_i$, the graph $\mathcal{F}_i=(V,E_{\mathcal{F}_i})$ such that $E_{\mathcal{F}_i}=\{(u,v) : \{u,v\} \in E_i, \uip.parent=v\}$ is called the {\em pseudoforest} resulting from round~$i$. 
\end{definition}

\FloatBarrier

As explained in Section~\ref{sec:variables}, the variables {\tt parent} and {\tt status} are somewhat redundant, since the possession of a token is synonymous with being a root. The equivalence between both variables after each round is established in Lemma~\ref{lem:L5} (state consistency). The main advantage of this equivalence is that it allows us to formulate and prove a large number of lemmas using either variable, depending on which is the most convenient in the given context.

\subsubsection{Outline of the proof}

In this section, we prove that the resulting forest is always correct (Definition \ref{def:forest}). 
To achieve that goal, 
we first define a validity criterion at the node level, which recursively ensures the correctness of the pseudotree this node belongs to thanks to Definition~\ref{def:forest} ({\it i.e.} the existence of a root implies correctness).


\begin{definition}
A node $u$ is said to be valid at the beginning of round $i$ if either $\uim.status=T$ or $\uim.parent$ is valid. 
\end{definition}

\begin{algorithm2e}[h]
  
  
    
  
	\SetKwBlock{Begin}{}{}

        \texttt{procedure BECOME\_ROOT}
  \Begin{
  \texttt{status} $\gets$ T\;
  \texttt{parent} $\gets \bot$\;
  } 
  \bigskip

  \texttt{procedure ADOPT\_PARENT(outMessage)}\Begin{
  \texttt{status $\gets$ N}\;
  \texttt{parent $\gets$ outMessage.target}\;
  \If{\texttt{outMessage.action = FLIP}}{
    \texttt{children $\gets$ children$\smallsetminus$parent\;}
    \texttt{score $\gets$ min(score, mailbox[parent].score)\;\label{line:min-score}}
  }
  }
  \bigskip

  \texttt{procedure ADOPT\_CHILD(message)}
  \Begin{
  \texttt{children.add(message.ID)\;}
  \If{\texttt{message.action = FLIP}}{
    \texttt{score $\gets$ max(score, message.score)\;\label{line:max-score}}
  }
  }
  \bigskip

  \texttt{procedure PREPARE\_MESSAGE(action, target)}
  \Begin{
  \Switch{\texttt{action}}{
    \Case{\texttt{SELECT}}{
      \texttt{outMessage $\gets$ (ID, N, SELECT, target, score)}\; 
}
  \Case{\texttt{FLIP}}{
  \texttt{outMessage $\gets$ (ID, T, FLIP, target, score)}\;
  }
  \Case{\texttt{HELLO}}{
  \texttt{outMessage $\gets$ (ID, status, $\bot$, $\bot$, score)}\;
  }
  }
  }
  \caption{Functions called in Algorithm~\ref{algo:main}.}
  \label{algo:functions}
\end{algorithm2e}

The correctness of the whole forest can thus be established through showing that, first, it is initially correct (Lemma~\ref{lem:initial_forest}) and, second, if it is correct after round $i$, then it is correct after round $i+1$ (Theorem~\ref{lem:nodes_validity}). The latter is difficult to prove, and it involves a number of intermediate steps that correspond to a case analysis based on every action a node can perform (sending FLIP messages, SELECT messages, etc.).

We first prove that a node $u$ that sends a successful FLIP to $v$
in a round, is valid at the end of that round
(Lemma \ref{lem:FLIP-valid}) because at the end of that round $v$ is a root.
The proof relies on the fact that during a given round, a node cannot receive a FLIP and send a SELECT or a FLIP
(Lemma  \ref{lem:2FLIPs}). 

We then prove some necessary properties on the {\tt score} variable at each node.
For instance, a node changes its score at most once during a round
(Lemma \ref{lem:NO-FLIP-SCORE} and \ref{lem:change_score-once}).
Also, the set of all scores are a permutation of the node identifiers after each round (Lemma \ref{lem:permutation}).

Then we prove that a node that sends a successful SELECT in a round $i$, is valid at the end of that 
round (Lemma~\ref{lem:select_valid}). This part is the most technical and is the one that proves that chains of selection can not create cycles thanks to the property that score variables remain a permutation of all nodes IDs.

Finally, we prove that all roots at the beginning of a round
are still valid at the end of the round (Lemma \ref{lem:T-vald}).
Therefore, if all nodes are valid at the beginning of round, 
then they are also valid at the end of the round 
(Theorem \ref{lem:nodes_validity}). Since they are initially valid (Lemma~\ref{lem:initial_forest}), we conclude by induction on the number of rounds.

\subsection{Consistency (detailed proofs)}
\begin{lemma}
  \label{lem:C0-forestConsistency}
  \label{lem:initial_forest}
The configuration $C_0$ is forest consistent and graph consistent.
In $C_0$, the resulting pseudoforest is correct.
\end{lemma}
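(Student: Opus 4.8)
The plan is to exploit the extreme simplicity of $C_0$, where the uniform initialization fixes $\texttt{parent}=\bot$ and $\texttt{children}=\emptyset$ at every node. Under these values, every atomic predicate appearing in the two consistency definitions and in Definition~\ref{def:F_i} is false, so all three claims will follow from vacuous truth. The only point requiring care is to record explicitly that $\bot$ is not the identifier of any node, so that the predicate ``$u.parent = v$'' cannot hold for any genuine node $v$.

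For forest consistency, I would observe that, for arbitrary nodes $u$ and $v$, the left-hand side ``$u.parent = v$'' is false because $u.parent = \bot \ne v$, and the right-hand side ``$u \in v.children$'' is false because $v.children = \emptyset$. Hence both sides of the biconditional are false and the equivalence holds for every pair. For graph consistency the argument is even shorter: the antecedent ``$u.parent = v$'' is again never satisfied, so the implication is vacuously true for every $u$. No appeal to an edge set is needed here, which is convenient since no round-$0$ graph is defined.

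For the correctness of the resulting pseudoforest, I would note that with all parents equal to $\bot$ the edge set $E_{\mathcal{F}_0}$ of Definition~\ref{def:F_i} is empty, so $\mathcal{F}_0$ is the graph $(V,\emptyset)$ consisting of isolated vertices. Each isolated vertex is a weakly connected component, i.e.\ a pseudotree, whose single vertex has outdegree $0$ and is therefore a root. By Definition~\ref{def:forest}, a pseudotree containing a root is a correct tree, so every pseudotree of $\mathcal{F}_0$ is correct and $\mathcal{F}_0$ is a correct forest.

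The statement carries no real obstacle: it is the trivial base case of the later induction, and its entire content is that the uniform initialization is maximally ``clean''. The only mild subtlety is the vacuous-truth reasoning --- in particular remembering that $\bot$ is a distinguished non-identifier value, so that no parent predicate is ever satisfied --- together with the harmless observation that the graph-consistency implication holds regardless of whether an edge set $E_0$ is even defined.
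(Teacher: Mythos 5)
Your proof is correct and follows essentially the same route as the paper's: exploit the uniform initialization ($\texttt{parent}=\bot$, $\texttt{children}=\emptyset$) to get both consistency properties vacuously, and observe that each node forms a singleton pseudotree that is its own root. You are in fact slightly more careful than the paper, which justifies forest consistency only via the \texttt{parent} variable and leaves the $\texttt{children}=\emptyset$ half of the biconditional implicit.
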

\begin{proof}
  The {\tt parent} variable is initialized to $\bot$. So, the configuration $C_0$ is forest consistent and graph consistent.
Any node $u$ belonging to the pseudotree $\T_u=(\{u\},\emptyset)$. 
Each of these pseudotrees contains  a root ($u$ itself) and is therefore a correct tree.
 \end{proof}



We say that $u$ {\em sends a FLIP} (resp. {\em SELECT}) in round $i$ if and only if $\uim.outMessage.action=FLIP$ (resp. {\em SELECT}). We say that it sends it {\em to node} $v$ if and only if $\uim.outMessage.target=v$. Finally the FLIP or SELECT is said to be {\em successful} (resp. {\em failed}) if $\{u,v\}\in E_i$ (resp. $\{u,v\}\notin E_i$).
 
\begin{lemma}[state consistency]
\label{lem:L5}
For all round $i \geq 0$, for all node $u$,
$\uip.status = T \Leftrightarrow \uip.parent = \bot$
\end{lemma}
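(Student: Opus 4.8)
The plan is to prove the equivalence by induction on the round number $i$, exploiting a structural observation about the code: the variables \texttt{status} and \texttt{parent} are never written independently of one another. The only two procedures of Algorithm~\ref{algo:main} and~\ref{algo:functions} that assign to either of them are \texttt{BECOME\_ROOT} and \texttt{ADOPT\_PARENT}, and each of them writes \emph{both} at once, in a mutually consistent way. Indeed, \texttt{BECOME\_ROOT} sets \texttt{status}$=T$ together with \texttt{parent}$=\bot$, so right after it both sides of the biconditional hold; whereas \texttt{ADOPT\_PARENT} sets \texttt{status}$=N$ together with \texttt{parent}$=$\texttt{outMessage.target}, which I will argue is always a genuine node ID (hence $\ne\bot$), so right after it both sides are false. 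Consequently, every individual write to this pair of variables re-establishes the equivalence, and the proof reduces to a ``last write wins'' argument over one round.

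For the base case $i=0$ I would simply invoke the initial values: every node starts with \texttt{status}$=T$ and \texttt{parent}$=\bot$, so both sides of the biconditional hold in $C_0$ and the equivalence is trivially true. For the inductive step, assuming the equivalence holds after round $i-1$ (equivalently before round $i$, using $\uim.varname=\uimp.varname$), I would split on whether node $u$ executes at least one of \texttt{BECOME\_ROOT} or \texttt{ADOPT\_PARENT} during round $i$. If it executes neither, then \texttt{status} and \texttt{parent} are untouched throughout the round, so $\uip.status=\uim.status$ and $\uip.parent=\uim.parent$, and the claim follows from the induction hypothesis. If it executes at least one, I would inspect the \emph{last} such execution in program order --- among the regeneration at Line~\ref{line:become}, the adoption at Line~\ref{line:becomeOrdinary}, and the \texttt{BECOME\_ROOT} triggered by an incoming FLIP at Line~\ref{line:becomeRoot} --- and observe that whichever procedure runs last leaves the pair $(\texttt{status},\texttt{parent})$ equal to either $(T,\bot)$ or $(N,\text{target})$ with $\text{target}\ne\bot$. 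Since no code after that point rewrites these two variables, the equivalence holds at the end of round $i$.

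The main obstacle, and the only point requiring genuine care rather than bookkeeping, is justifying that the \texttt{parent} value written by \texttt{ADOPT\_PARENT} is never $\bot$. For this I would note that \texttt{ADOPT\_PARENT} is entered only under the guard of Line~\ref{line:testadoptparent}, namely when \texttt{outMessage.action}$\in\{FLIP,SELECT\}$ and \texttt{outMessage.target}$\in$\texttt{neighbors}; and by inspection of \texttt{PREPARE\_MESSAGE}, the \texttt{target} field of a FLIP (resp. SELECT) message is set to \texttt{random(children)} (resp. \texttt{contender}), each a real node ID, whereas only HELLO messages carry \texttt{target}$=\bot$. Hence the adopted \texttt{parent} lies in \texttt{neighbors} and is therefore different from $\bot$, as required. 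The remaining effort is the purely mechanical verification that no line outside these two named procedures ever assigns \texttt{status} or \texttt{parent}, which is what makes the ``last write wins'' reasoning airtight.
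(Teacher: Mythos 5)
Your proof is correct and follows essentially the same route as the paper's: the paper also argues that the only writes to \texttt{status} and \texttt{parent} occur in \texttt{BECOME\_ROOT} and \texttt{ADOPT\_PARENT}, each of which updates both variables consistently (the pair becomes $(T,\bot)$ or $(N,\text{non-null})$), so the equivalence established at initialization is preserved. Your version merely makes explicit the induction scaffolding and the check that \texttt{outMessage.target} is never $\bot$, which the paper asserts without elaboration.
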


\begin{proof}
Initially, at any node $u$, $u.status=T$ and $u.parent=\bot$. The change of $u.status$ to $N$ always comes with the assignment of a non-null $identifier$ ($outMessage.target$) to $u.parent$ (procedure \texttt{ADOPT\_PARENT()}), and assigning the value $T$ to $u.status$ is always followed by the change of $u.parent$ to $\bot$ (procedure \texttt{BECOME\_ROOT()}). So at any configuration, $v.parent=\bot$ if and only if $v.status=T$.
 \end{proof}

\begin{lemma}
  \label{lem:FLIP-PROCEDURE}
  If $u$ does not send a FLIP or SELECT in round $i$, then $u$ does not execute
the procedure \texttt{ADOPT\_PARENT()} during round $i$.
\end{lemma}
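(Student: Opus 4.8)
The plan is to read off the result directly from the control flow of Algorithm~\ref{algo:main}. The key structural observation I would make first is that the procedure \texttt{ADOPT\_PARENT()} is invoked at a \emph{single} point in the whole algorithm, namely Line~\ref{line:becomeOrdinary}, and that this line sits inside the body of the conditional opened on Line~\ref{line:testadoptparent}. Consequently, the only way for $u$ to run \texttt{ADOPT\_PARENT()} during round $i$ is for the guard of that conditional to hold, and in particular for its first conjunct \texttt{outMessage.action} $\in \{FLIP,SELECT\}$ to be true at the instant it is tested.

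Next I would pin down the value of \texttt{outMessage} at that instant. Node $u$ transmits \texttt{outMessage} unchanged at the very start of round $i$ via the \texttt{SEND} instruction, so at that point its value is exactly $\uim.outMessage$. Between this \texttt{SEND} and the test on Line~\ref{line:testadoptparent}, the only statements executed are \texttt{RECEIVE()}, the update of \texttt{neighbors}, the intersection of \texttt{children} with \texttt{neighbors}, and possibly the call to \texttt{BECOME\_ROOT()} on Line~\ref{line:become}; none of these writes to \texttt{outMessage} (in particular \texttt{BECOME\_ROOT()} only alters \texttt{status} and \texttt{parent}). Hence, when the guard of Line~\ref{line:testadoptparent} is evaluated, \texttt{outMessage.action} still equals $\uim.outMessage.action$. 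Now assume $u$ does not send a FLIP or SELECT in round $i$; by the definition recalled just above the lemma this means $\uim.outMessage.action \notin \{FLIP,SELECT\}$ (i.e.\ $u$ sends a HELLO). The first conjunct of the guard is therefore false, so the guard is false, Line~\ref{line:becomeOrdinary} is not reached, and \texttt{ADOPT\_PARENT()} is not executed. Since Line~\ref{line:becomeOrdinary} is the unique call site, this establishes the claim.

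The proof is short, and the only point I would take care to verify explicitly is the invariance of \texttt{outMessage} between the \texttt{SEND} and the test on Line~\ref{line:testadoptparent}: the statements that rewrite \texttt{outMessage} (the assignment \texttt{outMessage} $\gets \bot$ and the subsequent \texttt{PREPARE\_MESSAGE} calls) all occur strictly later, in the message-preparation block at the end of the loop body, so they cannot affect the value tested on Line~\ref{line:testadoptparent} within round $i$. I would also confirm by inspection of Algorithm~\ref{algo:functions} that \texttt{ADOPT\_PARENT()} is nowhere else called, which is what licenses the ``unique call site'' argument.
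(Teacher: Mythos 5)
Your proposal is correct and follows essentially the same route as the paper's proof: both observe that the unique call site of \texttt{ADOPT\_PARENT()} is Line~\ref{line:becomeOrdinary}, guarded by the condition on Line~\ref{line:testadoptparent} requiring \texttt{outMessage.action} $\in \{FLIP, SELECT\}$. You merely spell out in more detail (the invariance of \texttt{outMessage} between \texttt{SEND} and the test) what the paper leaves implicit.
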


\begin{proof}
The execution of the procedure \texttt{ADOPT\_PARENT()} by $u$
is conditioned by the sending of a SELECT or a FLIP by $u$ during the current round (line \ref{line:testadoptparent}).
 \end{proof}

\begin{observation}
\label{obs:message_pepared}
At time where a node $u$ prepares its message to be
sent during the round $i$, we have $u.parent = \uimp.parent$
(resp. $children, status$).
\end{observation}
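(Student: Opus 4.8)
The plan is to read the claim off the control flow of Algorithm~\ref{algo:main} directly. By the model convention recalled in Section~\ref{sec:model}, the message a node sends at the beginning of round $i$ is the one it prepared at the very end of the computation phase of round $i-1$; hence ``the time where $u$ prepares the message to be sent during round $i$'' denotes the final block of the round-$(i-1)$ iteration, the one marked ``Prepares the message to be sent'', in which {\tt outMessage} is assigned. Using the identity $\uip.varname=\uipm.varname$ from the preamble of Section~\ref{sec:correctness}, the value I must match is the post-round-$(i-1)$ value $\uimp.parent$ (resp.\ $\uimp.children$, $\uimp.status$), i.e.\ the value the variable holds once the round-$(i-1)$ iteration has finished.

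First I would list every site in the round-$(i-1)$ iteration that can write {\tt parent}, {\tt children}, or {\tt status}: the assignment {\tt children}~$\gets$~{\tt children}~$\cap$~{\tt neighbors} (Line~\ref{line:intersection}), the regeneration {\tt BECOME\_ROOT()} (Line~\ref{line:become}), the {\tt ADOPT\_PARENT()} call (Line~\ref{line:becomeOrdinary}), and, inside the mailbox loop, the {\tt BECOME\_ROOT()} (Line~\ref{line:becomeRoot}) and {\tt ADOPT\_CHILD()} (Line~\ref{line:addChild}) calls. The structural point I want to record is that all of these lie \emph{strictly before} the message-preparation block.

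Then I would check that the preparation block itself touches none of the three variables: it sets {\tt outMessage}~$\gets\bot$ and only \emph{reads} {\tt status}, {\tt contenderScore}, {\tt score}, and {\tt children} to decide which of {\tt PREPARE\_MESSAGE(SELECT,$\cdot$)} (Line~\ref{line:prepareSelect}), {\tt PREPARE\_MESSAGE(FLIP,$\cdot$)} (Line~\ref{line:prepareFlip}), or the default {\tt PREPARE\_MESSAGE(HELLO,$\cdot$)} to invoke, and each branch of {\tt PREPARE\_MESSAGE} in Algorithm~\ref{algo:functions} writes only {\tt outMessage}. Combining the two facts, at the instant the message is prepared the three variables already hold the values they keep until the end of the round-$(i-1)$ iteration, namely $\uimp.parent$, $\uimp.children$, $\uimp.status$, which is exactly the assertion. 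I do not anticipate a genuine obstacle: the statement is a bookkeeping fact about the position of the preparation block within the loop body, so the only diligence required is the exhaustive (but finite) verification that neither the preparation block nor the {\tt PREPARE\_MESSAGE} procedure ever writes {\tt parent}, {\tt children}, or {\tt status}.
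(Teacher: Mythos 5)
Your proposal is correct: the paper states this as an Observation without any accompanying proof, and your control-flow argument---all writes to \texttt{parent}, \texttt{children}, and \texttt{status} occur strictly before the message-preparation block, and that block together with \texttt{PREPARE\_MESSAGE} writes only \texttt{outMessage}---is exactly the code-inspection justification the paper implicitly relies on. Nothing is missing.
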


\begin{lemma}
  \label{lem:FLIP-SELECT-T}
  If $u$ sends a FLIP or SELECT in round $i$, then $\uim.status = T$.
\end{lemma}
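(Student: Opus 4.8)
The plan is to trace back where a FLIP or SELECT message can be produced in the code and to observe that this can only happen when the sender's \texttt{status} is $T$ at preparation time. By definition, saying that $u$ sends a FLIP or SELECT in round $i$ means $\uim.outMessage.action \in \{FLIP,SELECT\}$. Since the message sent at the beginning of round $i$ is exactly the one prepared at the end of round $i-1$, we have $\uim.outMessage = \uimp.outMessage$, so it suffices to reason about the preparation step that produced this message, which takes place at the end of round $i-1$.

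First I would inspect the message-preparation block at the end of Algorithm~\ref{algo:main}. The only two calls that set the \texttt{action} field to FLIP or SELECT are \texttt{PREPARE\_MESSAGE(SELECT, contender)} (Line~\ref{line:prepareSelect}) and \texttt{PREPARE\_MESSAGE(FLIP, random(children))} (Line~\ref{line:prepareFlip}); the remaining call, \texttt{PREPARE\_MESSAGE(HELLO, $\bot$)}, yields action $\bot$. Crucially, both Line~\ref{line:prepareSelect} and Line~\ref{line:prepareFlip} lie inside the branch guarded by the test \texttt{status}$\,=T$. Hence, for $u$ to prepare a message whose action is FLIP or SELECT, the value of $u$'s \texttt{status} variable at the moment of preparation must be $T$.

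It then remains to connect that preparation-time status with $\uim.status$. By Observation~\ref{obs:message_pepared}, at the moment $u$ prepares the message to be sent in round $i$, the value of \texttt{status} equals $\uimp.status$; and since $\uip.varname=\uipm.varname$ for every variable, we have $\uimp.status = \uim.status$. Combining these gives $\uim.status = T$, as claimed. The argument is essentially a code inspection, so I do not expect a deep obstacle; the only point requiring care is the bookkeeping between the moment the message is prepared (end of round $i-1$) and the moment it is sent (beginning of round $i$), which is exactly what Observation~\ref{obs:message_pepared} together with the identity $\uip.varname=\uipm.varname$ is there to bridge.
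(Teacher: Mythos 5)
Your proof is correct and follows essentially the same route as the paper's: the sent message was prepared at the end of round $i-1$, and the only calls producing action FLIP or SELECT (Lines~\ref{line:prepareSelect} and~\ref{line:prepareFlip}) sit inside the branch guarded by \texttt{status}$\,=T$. You are merely more explicit than the paper in invoking Observation~\ref{obs:message_pepared} to identify the preparation-time status with $\uim.status$, which the paper leaves implicit.
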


\begin{proof}
$u$ sends in round $i$ the message prepared in round $i-1$. If $u$ sends a FLIP or a SELECT in round $i$ then in round $i-1$ \texttt{PREPARE\_MESSAGE()} is called  with FLIP or SELECT as action (lines \ref{line:prepareSelect} or \ref{line:prepareFlip}). Both instructions are conditioned by $status=T$.
 \end{proof}

\begin{lemma}
  \label{lem:T-if-T}
  If $v$ sends a message containing $T$ in round $i$, then $\vim.status=T$.
\end{lemma}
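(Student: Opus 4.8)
The plan is to argue about the \emph{status field} carried by the message $v$ sends in round $i$, which is exactly the message $v$ prepared at the end of round $i-1$ and stored in \texttt{outMessage}. The only place where this field is ever written is the procedure \texttt{PREPARE\_MESSAGE()} of Algorithm~\ref{algo:functions}, so I would proceed by a case analysis on the \texttt{action} of the sent message, inspecting the three corresponding cases of that procedure. This reduces the whole claim to checking, for each possible action, when the status field can equal $T$ and relating that to $\vim.status$.

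First I would dispose of the SELECT case: there \texttt{PREPARE\_MESSAGE} produces $(ID, N, SELECT, target, score)$, whose status field is $N$, so a SELECT message can never ``contain $T$''. Hence the hypothesis forces the sent message to be either a FLIP or a HELLO. In the FLIP case the conclusion is immediate from Lemma~\ref{lem:FLIP-SELECT-T}: if $v$ sends a FLIP in round $i$ then $\vim.status = T$, which is exactly what we want (independently of the field value, which happens to be $T$ here).

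The HELLO case is the only one requiring a short computation. There \texttt{PREPARE\_MESSAGE} sets the message to $(ID, status, \bot, \bot, score)$, so the status field is not a constant but the node's current \texttt{status} at the instant the round-$i$ message is prepared. By Observation~\ref{obs:message_pepared} this value equals $\vimp.status$, and by the notation identity $\vimp.status = \vim.status$; since the message is assumed to contain $T$, we obtain $\vim.status = T$. Collecting the three cases yields the lemma. I expect no genuine obstacle here: the only points to watch are that SELECT messages can never carry $T$ (so the statement is vacuously true for them) and that the preparation-time status must be transported to the ``before round $i$'' state through Observation~\ref{obs:message_pepared} together with the identity $\vimp = \vim$.
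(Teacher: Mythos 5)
Your proof is correct and follows essentially the same route as the paper: the paper's argument is precisely that \texttt{PREPARE\_MESSAGE()} sets the status field to $T$ only when the preparing node has $status=T$, which is your three-case analysis (SELECT never carries $T$, FLIP is guarded by $status=T$, HELLO copies the current status) stated in compressed form. Your version is merely more explicit about transporting the preparation-time status to $\vim.status$ via Observation~\ref{obs:message_pepared}, a step the paper leaves implicit.
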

\begin{proof}
The procedure \texttt{PREPARE\_MESSAGE()} is executed by a node $u$ in round $i-1$ to construct 
the message $m$ to be sent in round $i$.
In all cases \texttt{PREPARE\_MESSAGE()} sets $m.senderStatus$ to $T$ only if $u.status=T$.
 \end{proof}

\begin{lemma}
\label{lem:select-score}
If $u$ sends a SELECT to $v$ in round $i$, then 
$\uim.score < \vimm.score$.
\end{lemma}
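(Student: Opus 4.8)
The plan is to unwind the definition of ``$u$ sends a SELECT to $v$ in round $i$'', isolate the single score comparison that triggers such a message, and then translate the two scores appearing in that comparison into the indexed quantities $\uim.score$ and $\vimm.score$ by accounting for the one-round delay between a message being prepared and being sent. First I would note that the message $u$ sends in round $i$ was prepared by $u$ at the end of round $i-1$, and that a SELECT is produced only at Line~\ref{line:prepareSelect}, which is reached precisely when the guard \texttt{contenderScore > score} holds (inside the \texttt{status}~$=$~T branch). Since \texttt{PREPARE\_MESSAGE} is the last operation of a round and \texttt{score} is modified only earlier in the round (in \texttt{ADOPT\_PARENT} and \texttt{ADOPT\_CHILD}), the value of \texttt{score} at this comparison is exactly $u$'s score at the end of round $i-1$, that is $\uim.score$. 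Because the SELECT is targeted at \texttt{contender} and it is sent to $v$, we have $v=\texttt{contender}$, and by the contender-selection loop the recorded \texttt{contenderScore} equals the \texttt{score} field carried by the message $u$ received from $v$ in round $i-1$.

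Next I would pin down that field's value. The message $v$ sent in round $i-1$ was itself prepared by $v$ at the end of round $i-2$, and \texttt{PREPARE\_MESSAGE} always copies $v$'s current \texttt{score} into the message it builds. Hence this \texttt{score} field equals $v$'s score at the end of round $i-2$, which by the identity $\uip.varname=\uipm.varname$ is the same as $v$'s score before round $i-1$, namely $\vimm.score$. Substituting $\texttt{contenderScore}=\vimm.score$ and $\texttt{score}=\uim.score$ into the triggering inequality $\texttt{contenderScore}>\texttt{score}$ yields $\vimm.score>\uim.score$, which is exactly the claim.

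The main obstacle, and really the only delicate point, is keeping the round indices straight: the asymmetry between $\uim$ and $\vimm$ arises entirely because $u$'s comparison uses its freshly updated score from round $i-1$, whereas the score $v$ advertised in its round-$(i-1)$ message was frozen one round earlier, at the end of round $i-2$. I would therefore take care to justify two things precisely. First, that no score update occurs between the contender-detection loop and the SELECT preparation within round $i-1$, so that the compared value is indeed $\uim.score$ and not some intermediate value. Second, that the \texttt{score} field of any received message always reflects the sender's score as of the end of the previous round; an analogue of Observation~\ref{obs:message_pepared} for the \texttt{score} variable makes this immediate, since \texttt{score} is only assigned before \texttt{PREPARE\_MESSAGE} is invoked.
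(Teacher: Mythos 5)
Your proof is correct and follows essentially the same route as the paper's: identify the guard \texttt{contenderScore > score} at Line~\ref{line:prepareSelect} in round $i-1$, observe that the compared \texttt{score} is $u$'s end-of-round-$(i-1)$ value $\uim.score$, and that \texttt{contenderScore} carries the score field of $v$'s message, which was frozen at the end of round $i-2$ and hence equals $\vimm.score$. Your version is in fact somewhat more careful than the paper's about why the score field of $v$'s message equals $\vimm.score$ and why $u.score$ has its final value at the comparison.
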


\begin{proof}
The value of the $score$ field in the message sent by a node $v$ in round $i-1$ 
is $\vimm.score$.

Assumes that the node  $u$ sends a SELECT to $v$ in a round $i$.
So, during the round $i-1$, $u$ sets its $contender$ variable to $v$ and its $contenderScore$ 
variable to $message.score$
$message$ being the message sent by $v$ at the begining of round $i-1$.
From that time to the end of round $i-1$, $u.score$ is not modified.

So $\uim.score < \vimm.score$, if $u$ sends 
a SELECT to $v$ in a round $i$.
 \end{proof}

\begin{lemma}
  \label{lem:cond-single-Flip}
If at the beginning of round $i$, the configuration 
is forest  consistent then only 
$\uim.parent$ can  send a FLIP at destination of $u$ during the round $i$.
\depends{\ref{th:consistency}}
\end{lemma}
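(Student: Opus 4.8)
The plan is to show that any node $w$ that sends a FLIP targeted at $u$ during round $i$ must necessarily be $\uim.parent$. Since \texttt{parent} holds a single value, this establishes the uniqueness claimed by the lemma. The argument is to trace back how and when such a FLIP is prepared, deduce that $u$ must lie in $w$'s children set at the beginning of round $i$, and then invoke forest consistency to turn that children relation into the desired parent relation.

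First I would recall that the message $w$ sends in round $i$ was prepared at the end of round $i-1$ by \texttt{PREPARE\_MESSAGE}. For its action to be $FLIP$, the only possibility is that line~\ref{line:prepareFlip} was executed, which sets the target to \texttt{random(children)}. Hence the target $u$ necessarily belongs to $w$'s \texttt{children} set at the moment of preparation. By Observation~\ref{obs:message_pepared}, the value of \texttt{children} used at preparation time coincides with $\wim.children$ (equivalently $\wimp.children$, the value held after round $i-1$). I therefore obtain $u \in \wim.children$.

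Next I would apply the forest consistency hypothesis. The configuration at the beginning of round $i$ is precisely the configuration $C_{i-1}$ reached after round $i-1$, so forest consistency reads $\uim.parent = v \Leftrightarrow u \in \vim.children$ for every pair of nodes, using the identification $\uim.varname = \uimp.varname$. Instantiating with $v = w$ and combining with $u \in \wim.children$ gives $\uim.parent = w$. As \texttt{parent} takes a single value, no node other than $\uim.parent$ has $u$ among its children at the beginning of round $i$, and thus no node other than $\uim.parent$ can prepare and send a FLIP targeted at $u$ in round $i$.

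This is fundamentally a bookkeeping argument, so I do not expect a deep obstacle; the single point that requires care is the temporal alignment. I must read the \texttt{children} set at the correct instant, namely the end of round $i-1$ (via Observation~\ref{obs:message_pepared}), rather than within round $i$ where \texttt{children} may already have been updated on line~\ref{line:intersection}, and I must note that the ``beginning of round $i$'' configuration is exactly $C_{i-1}$, so that the forest consistency hypothesis (stated for configurations $C_j$) applies directly.
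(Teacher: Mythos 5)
Your proof is correct and follows essentially the same route as the paper's: the FLIP target must come from the sender's \texttt{children} set at preparation time (end of round $i-1$), and forest consistency then forces the sender to be $\uim.parent$. Your version is merely more explicit about the temporal alignment via Observation~\ref{obs:message_pepared}, which the paper leaves implicit.
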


\begin{proof}
A node $v$ can prepare a FLIP message to the node $u$ 
at then end of round $i-1$ only if $u \in \vim.children$.
We have $\uim.parent=v$ according to the hypothesis (forest consistency at the beginning of round). 
Therefore, only the node  $\uim.parent$ can prepare 
a FLIP message at destination of $u$, 
at the end of round $i-1$.
 \end{proof}

\subsubsection{Graph consistency:}
\begin{lemma}
  \label{lem:add_parent}
Let $u$ be a node such that $\uim.parent \neq v \wedge \uip.parent = v$.
Then $u$ sends a successful FLIP or SELECT to $v$ during the round $i$.
\end{lemma}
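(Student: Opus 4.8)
The plan is to track every write to the \texttt{parent} variable during round $i$ and show that the only history consistent with the hypothesis is an execution of \texttt{ADOPT\_PARENT()} whose argument is a successful FLIP or SELECT to $v$. First I would observe that, throughout Algorithms~\ref{algo:main} and~\ref{algo:functions}, the variable \texttt{parent} is assigned in exactly two places: \texttt{BECOME\_ROOT()}, which always sets it to $\bot$ (invoked on lines~\ref{line:become} and~\ref{line:becomeRoot}), and \texttt{ADOPT\_PARENT()} on line~\ref{line:becomeOrdinary}, which sets it to \texttt{outMessage.target}. Since $\uim.parent \neq v$ while $\uip.parent = v$, and $v$ is a genuine node identifier (hence $v \neq \bot$), the variable is modified during round $i$ and its final value is not $\bot$. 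As \texttt{BECOME\_ROOT()} can only produce $\bot$, the \emph{last} write to \texttt{parent} in the round must be an execution of \texttt{ADOPT\_PARENT()}, which forces \texttt{outMessage.target} $= v$.

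Next I would invoke the guard controlling that assignment. \texttt{ADOPT\_PARENT()} on line~\ref{line:becomeOrdinary} is reached only when the test on line~\ref{line:testadoptparent} holds, namely \texttt{outMessage.action} $\in \{FLIP, SELECT\}$ and \texttt{outMessage.target} $\in$ \texttt{neighbors}. At this point in the round, \texttt{outMessage} still holds the message that $u$ sent at the beginning of round $i$ (it is only overwritten later, in the message-preparation block), i.e. it is $\uim.outMessage$. By the definition of ``sends a FLIP/SELECT in round $i$'', the first conjunct together with \texttt{outMessage.target} $= v$ says precisely that $u$ sends a FLIP or a SELECT to $v$ in round $i$.

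Finally I would argue successfulness. The second conjunct gives $v \in$ \texttt{neighbors}, and \texttt{neighbors} is exactly the set of senders whose messages reached $u$ this round (line~\ref{line:neighbors}); hence $u$ received a message from $v$ in round $i$. Since every node broadcasts to all its $E_i$-neighbors and $v$ certainly sends some message each round (an \texttt{outMessage} is always prepared, defaulting to HELLO), the reciprocity principle yields $\{u,v\} \in E_i$, which is the definition of the FLIP/SELECT being successful.

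The only delicate point is to confirm that the \texttt{ADOPT\_PARENT()} write is the one surviving to the end of the round, rather than being overwritten by the later \texttt{BECOME\_ROOT()} that the message-processing loop performs when $u$ receives a FLIP (line~\ref{line:becomeRoot}); this is immediate, since any such subsequent call would leave $\uip.parent = \bot \neq v$, contradicting the hypothesis. A minor accompanying check is that \texttt{outMessage} is not reassigned between the start of the round and line~\ref{line:becomeOrdinary}, so that its \texttt{action} and \texttt{target} fields are genuinely those of the message sent by $u$ in round $i$.
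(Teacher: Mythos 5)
Your proposal is correct and follows essentially the same route as the paper's proof: identify \texttt{ADOPT\_PARENT()} as the only assignment that can leave \texttt{parent} with a non-$\bot$ value, read off from the guard on line~\ref{line:testadoptparent} that $u$ must have sent a FLIP or SELECT with target $v$ and that $v\in\texttt{neighbors}$, and conclude $\{u,v\}\in E_i$. Your version is merely more explicit about the last-write/overwrite issue and about \texttt{outMessage} not being reassigned before the test, which the paper leaves implicit.
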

\begin{proof}
The only change of $parent$ by $u$ to a non-null identifier $v$ in 
a round $i$ is at the execution of the procedure \texttt{ADOPT\_PARENT()} 
which is conditioned by the reception of a message from $v$ (line \ref{line:becomeOrdinary}).
If $u$ receives the message of $v$ during round $i$ then $v$ effectively receives the
message sent by $v$ (\textit{reciprocal reception property}).
 \end{proof}

\begin{lemma}
  \label{lem:keep_parent}
Let $u$ be a node such that $\uim.parent = v \wedge \uip.parent = v$.
We have $\{u,v\} \in E_i$.
\end{lemma}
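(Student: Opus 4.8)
The plan is to argue by contradiction: assume $\{u,v\} \notin E_i$ and show that this would force $\uip.parent = \bot$ rather than $v$. The first step is to pin down $u$'s status at the start of the round. Since $\uim.parent = v \neq \bot$, state consistency (Lemma~\ref{lem:L5}, applied to the configuration $C_{i-1}$) gives $\uim.status = N$. So $u$ enters round $i$ as a non-root whose parent pointer is $v$.

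Next I would exploit the absence of the edge. If $\{u,v\} \notin E_i$, then by the reciprocity principle $u$ receives no message from $v$ in round $i$, hence $v \notin \texttt{neighbors}$. Combined with $\uim.status = N$, the guard on line~\ref{line:become} is satisfied, and $u$ therefore executes \texttt{BECOME\_ROOT()}, which resets $u.parent$ to $\bot$.

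The remaining step is to show that $u$ cannot reacquire $v$ as a parent later in the same round. The only procedure that assigns a non-null value to $parent$ is \texttt{ADOPT\_PARENT()} (line~\ref{line:becomeOrdinary}), and by Lemma~\ref{lem:FLIP-PROCEDURE} it is executed only if $u$ sends a FLIP or SELECT in round $i$. But Lemma~\ref{lem:FLIP-SELECT-T} forces $\uim.status = T$ for any such sending, contradicting $\uim.status = N$. Hence \texttt{ADOPT\_PARENT()} is not executed, $u.parent$ remains $\bot$, and $\uip.parent = \bot \neq v$, contradicting the hypothesis. This forces $\{u,v\} \in E_i$.

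The argument is essentially a single traversal of the main loop in execution order, and I do not expect a serious obstacle. The one point requiring care is the relative ordering of lines~\ref{line:become} and~\ref{line:testadoptparent}: since \texttt{BECOME\_ROOT()} fires \emph{before} the \texttt{ADOPT\_PARENT()} test, I must confirm that no other assignment to $parent$ intervenes between them, so that the status argument above indeed blocks the only candidate reassignment.
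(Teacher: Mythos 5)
Your proof is correct and is essentially the paper's own argument in contrapositive form: both rely on Lemma~\ref{lem:L5} to get $\uim.status = N$, then Lemmas~\ref{lem:FLIP-SELECT-T} and~\ref{lem:FLIP-PROCEDURE} to rule out \texttt{ADOPT\_PARENT()}, so that the persistence of $u.parent = v$ hinges entirely on whether \texttt{BECOME\_ROOT()} fires at line~\ref{line:become}, i.e.\ on whether $v \in \texttt{neighbors}$. The paper simply runs this directly (parent unchanged $\Rightarrow$ \texttt{BECOME\_ROOT()} not executed $\Rightarrow$ edge present) where you argue by contradiction; the content is identical.
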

\begin{proof}
By Lemma \ref{lem:L5}, we have $\uim.status=N$.
So, $u$ does not send a FLIP or SELECT during the round $i$ 
(Lemma \ref{lem:FLIP-SELECT-T}).
Then, $u$ does not execute \texttt{ADOPT\_PARENT()} during the round $i$
according to Lemma  \ref{lem:FLIP-PROCEDURE}.
Since $\uip.parent = v$ we conclude that
$u$ does not execute the procedure \texttt{BECOME\_ROOT()} during the round $i$.
So $u$ did receive a message from $\uim.parent$ in round $i$.
We have $\{u,v\} \in E_i$.
 \end{proof}

\begin{corollary}[graph consistency]
\label{co:consistency}
Every configuration is graph consistent.
\end{corollary}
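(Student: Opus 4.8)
The plan is to reduce graph consistency of an arbitrary configuration $C_i$ to a simple dichotomy on whether a node $u$ kept or changed its parent during round $i$, dispatching each case to one of the two immediately preceding lemmas. Fix a round $i \geq 1$ and a node $u$ with $\uip.parent = v$ for some actual node $v$; if $\uip.parent = \bot$ there is nothing to prove, since the implication in the definition of graph consistency concerns only a genuine neighbor $v$. I would then distinguish two cases according to the value of $\uim.parent$.

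If $\uim.parent = v$, the parent pointer was unchanged across the round, and Lemma~\ref{lem:keep_parent} directly yields $\{u,v\} \in E_i$. If instead $\uim.parent \neq v$, then $u$ acquired $v$ as a new parent during the round, so Lemma~\ref{lem:add_parent} applies and tells us that $u$ must have sent a \emph{successful} FLIP or SELECT to $v$ in round $i$; by the definition of ``successful'' given just before Lemma~\ref{lem:L5}, this means precisely $\{u,v\} \in E_i$. In both cases the edge is present, so $C_i$ satisfies graph consistency for the node $u$, and since $u$ was arbitrary, $C_i$ is graph consistent.

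It remains only to treat the base configuration $C_0$, which is handled separately: Lemma~\ref{lem:C0-forestConsistency} already establishes that $C_0$ is graph consistent, the implication holding vacuously there because all \texttt{parent} variables equal $\bot$. Combining this base case with the case analysis above for every $i \geq 1$ yields that every configuration is graph consistent. The argument is deliberately short because the substantive work has been front-loaded into Lemmas~\ref{lem:add_parent} and~\ref{lem:keep_parent}; the only point requiring a little care is to remember that the definition is vacuous when a node is its own root ($\uip.parent = \bot$), so no spurious edge requirement is imposed on roots.
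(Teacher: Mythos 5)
Your proof is correct and follows essentially the same route as the paper, which likewise derives the corollary from Lemmas~\ref{lem:add_parent} and~\ref{lem:keep_parent} (you merely spell out the case split on whether $\uim.parent = v$ and the vacuous cases explicitly). No gaps.
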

\begin{proof}
The configuration reached after any round is graph consistent
(Lemmas \ref{lem:add_parent} and \ref{lem:keep_parent}).
 \end{proof}

\subsubsection{Forest consistency:}

\begin{lemma}
\label{lem:N-parent}
If $\uim.parent=v$ then $\uip.parent=v$ or $\uip.parent=\bot$.
\depends{ }
\end{lemma}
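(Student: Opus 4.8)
The plan is to argue that the only procedure capable of reassigning $u$'s \texttt{parent} field to a new \emph{non-null} value during a round is \texttt{ADOPT\_PARENT()}, and to show that this procedure is never executed by $u$ in round $i$ under the hypothesis. Since $\uim.parent = v$ with $v$ a node identifier (hence $v \neq \bot$), Lemma~\ref{lem:L5} (state consistency) immediately gives $\uim.status = N$. This status information is the lever for the whole argument.

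First I would rule out \texttt{ADOPT\_PARENT()}. By Lemma~\ref{lem:FLIP-SELECT-T}, a node can send a FLIP or SELECT in a round only if its status was $T$ at the start of that round; since $\uim.status = N$, node $u$ sends neither a FLIP nor a SELECT in round $i$. Lemma~\ref{lem:FLIP-PROCEDURE} then yields directly that $u$ does not execute \texttt{ADOPT\_PARENT()} during round $i$, because the guard for that procedure (Line~\ref{line:testadoptparent}) requires $u$ to have sent one of those two messages.

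Next I would enumerate the writes to \texttt{parent}. Inspecting Algorithms~\ref{algo:main} and~\ref{algo:functions}, the field is assigned only by two procedures: \texttt{ADOPT\_PARENT()}, which installs the (non-null) target of the outgoing message, and \texttt{BECOME\_ROOT()}, which installs $\bot$. Having excluded the former, the only remaining write is via \texttt{BECOME\_ROOT()} (reachable at Line~\ref{line:become} upon loss of the parent link, or at Line~\ref{line:becomeRoot} upon receipt of a FLIP addressed to $u$), and every such write sets \texttt{parent} to $\bot$. Therefore, across round $i$ the field is either never rewritten, giving $\uip.parent = v$, or rewritten only ever to $\bot$, giving $\uip.parent = \bot$, which exhausts the two cases of the statement.

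I do not expect a genuine obstacle: the argument is a careful case analysis of the control flow. The one point demanding care is \emph{completeness} — verifying that every code path touching \texttt{parent} has been accounted for, and in particular that no path could run \texttt{ADOPT\_PARENT()} after \texttt{BECOME\_ROOT()} so as to install a \emph{different} non-null parent. That scenario is closed off precisely because the guard of \texttt{ADOPT\_PARENT()} depends on $u$ having sent a FLIP or SELECT, which the status argument excludes independently of anything \texttt{BECOME\_ROOT()} may do in the same round.
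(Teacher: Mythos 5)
Your proof is correct and follows exactly the paper's own argument: derive $\uim.status=N$ from Lemma~\ref{lem:L5}, exclude the sending of FLIP/SELECT via Lemma~\ref{lem:FLIP-SELECT-T}, hence exclude \texttt{ADOPT\_PARENT()} via Lemma~\ref{lem:FLIP-PROCEDURE}, leaving only \texttt{BECOME\_ROOT()} (which writes $\bot$) as a possible modification. Your additional remarks on the completeness of the write enumeration only make explicit what the paper leaves implicit.
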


\begin{proof}
According to Lemma \ref{lem:L5}, we have $\uim.status=N$, so 
u cannot send a  FLIP or a SELECT in round $i$ (by Lemma \ref{lem:FLIP-SELECT-T}).
Therefore, $u$ does not execute \texttt{ADOPT\_PARENT()} in round $i$ 
(Lemma \ref{lem:FLIP-PROCEDURE}). We conclude that 
$\uip.parent=v$ or $\uip.parent=\bot$.
 \end{proof}

\begin{lemma}
\label{lem:cond-2FLIPs}
Assume that at the beginning of round $i$, the configuration is forest consistent. If $u$ receives a FLIP in round $i$, then it does not send a FLIP nor a SELECT in round $i$.
\depends{\ref{th:consistency},  \ref{lem:FLIP-SELECT-T}, \ref{lem:L5}}
\end{lemma}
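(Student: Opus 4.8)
The plan is to follow the parent link backwards and then translate the resulting structural fact into a statement about \texttt{status}. Since the configuration is forest consistent at the beginning of round~$i$ by hypothesis, Lemma~\ref{lem:cond-single-Flip} applies directly and tells us that the only node that can send a FLIP at destination of $u$ during round~$i$ is $\uim.parent$. Hence, if $u$ genuinely receives a FLIP in round~$i$, that FLIP must originate from $\uim.parent$; in particular $\uim.parent \neq \bot$, so $u$ is not a root before round~$i$.

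Next I would convert ``not a root'' into the status value. Recalling that the state before round~$i$ coincides with the state after round~$i-1$ (that is, $\uim.status = \uimp.status$ and $\uim.parent = \uimp.parent$), I apply the state-consistency Lemma~\ref{lem:L5}: from $\uim.parent \neq \bot$ it follows that $\uim.status = N$. Finally, the contrapositive of Lemma~\ref{lem:FLIP-SELECT-T} closes the argument, since sending a FLIP or a SELECT in round~$i$ would force $\uim.status = T$, contradicting $\uim.status = N$. Therefore $u$ sends neither a FLIP nor a SELECT in round~$i$.

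The only delicate point is the very first step: one must make sure that ``receiving a FLIP'' actually pins the sender down to be the parent, and this is precisely where the forest-consistency hypothesis is needed. Any neighbour $v$ that could have prepared a FLIP for $u$ at the end of round~$i-1$ must have had $u \in \vim.children$ (the target of a FLIP is drawn from the children set), and forest consistency then yields $\uim.parent = v$. Should one prefer not to invoke Lemma~\ref{lem:cond-single-Flip}, this reciprocal reading of the definition $\uim.parent = v \Leftrightarrow u \in \vim.children$ can be inlined; everything after it is a routine chaining of the state-consistency and status lemmas.
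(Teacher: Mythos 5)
Your proof is correct and is essentially the paper's argument run in the opposite direction: the paper proves the contrapositive (sending a FLIP/SELECT forces $\uim.status=T$, hence $\uim.parent=\bot$ by Lemma~\ref{lem:L5}, hence by forest consistency no node could have prepared a FLIP for $u$), while you chain the same three ingredients forward, routing the forest-consistency step through Lemma~\ref{lem:cond-single-Flip}. No gap; the two are logically equivalent.
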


\begin{proof}
We will establish the contraposition of the lemma statement: 
if $u$ sends a FLIP or a SELECT in round $i$, then it does not receive a FLIP in round $i$.
By Lemma \ref{lem:FLIP-SELECT-T}, we have $\uim.status=T$.
According to Lemma \ref{lem:L5}, $\uim.parent=\bot$.
Thus according to the hypothesis (forest consistency at the beginning of round), for any node $v$, $u \notin \vim.children$.
Therefore no node has prepared a FLIP message at destination of $u$, 
in round $i-1$.
So $u$ cannot receive a FLIP in round $i$.
 \end{proof}

\begin{lemma}
 \label{lem:add_parent_add_child}
Assume that at the beginning of round $i$, the configuration is forest consistent. If in round $i$, $u$ changes $u.parent$ to $v$ then 
$u \in \vip.children$ : 
$\uim.parent \neq v \wedge \uip.parent = v \Rightarrow u \in \vip.children$.
\end{lemma}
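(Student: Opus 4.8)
The plan is to combine Lemma~\ref{lem:add_parent} with a direct inspection of how node $v$ handles its mailbox during round $i$, keeping track of every instruction that can alter $v$'s \texttt{children} set. The hypothesis $\uim.parent \neq v \wedge \uip.parent = v$ is exactly ``$u$ changes its parent to $v$ in round $i$'', so Lemma~\ref{lem:add_parent} gives that $u$ sends a \emph{successful} FLIP or SELECT to $v$ in round $i$. By definition of success we have $\{u,v\}\in E_i$, hence by the reciprocity principle $v$ also receives $u$'s message in round $i$, and that message carries \texttt{target}$=v$ and \texttt{action}$\in\{FLIP,SELECT\}$.

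Next I would trace $v$'s processing of this message in the \texttt{ForAll} loop of Algorithm~\ref{algo:main}. Since \texttt{message.target} equals $v$'s own \texttt{ID}, the first branch is taken and \texttt{ADOPT\_CHILD(message)} is executed (Line~\ref{line:addChild}), which performs \texttt{children.add}$(u)$. Thus $u$ belongs to $v$'s \texttt{children} at that instant; it remains to show that $u$ survives there until the end of the round.

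This last point is where the actual (though shallow) work lies: I would enumerate every operation that can \emph{remove} an element from \texttt{children} in a single round. There are exactly two: the intersection \texttt{children}$\gets$\texttt{children}$\cap$\texttt{neighbors} (Line~\ref{line:intersection}) and the deletion \texttt{children}$\gets$\texttt{children}$\setminus$\texttt{parent} performed inside \texttt{ADOPT\_PARENT} (invoked at Line~\ref{line:becomeOrdinary}). Both of these are scheduled \emph{before} the mailbox-processing loop, hence strictly before the \texttt{ADOPT\_CHILD} that adds $u$. Inside the loop itself, the only other action is \texttt{BECOME\_ROOT} (Line~\ref{line:becomeRoot}), which modifies \texttt{status} and \texttt{parent} but never touches \texttt{children}; and the subsequent message-preparation phase leaves \texttt{children} untouched as well. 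Since \texttt{ADOPT\_CHILD} only ever adds, the insertion of $u$ cannot be undone, so $u \in \vip.children$.

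I expect no deep obstacle here; the entire argument hinges on the scheduling fact that the set-intersection and the \texttt{ADOPT\_PARENT} deletion both precede the mailbox loop, so that an element added by \texttt{ADOPT\_CHILD} is final for the round. Careful bookkeeping of this relative order is the only delicate step. It is worth noting that the forest-consistency hypothesis does not appear to be needed for this particular inclusion: the reasoning rests solely on Lemma~\ref{lem:add_parent}, the reciprocity principle, and the control flow of a round.
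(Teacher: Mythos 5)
Your proof is correct and follows essentially the same route as the paper's: invoke Lemma~\ref{lem:add_parent} to get a successful FLIP or SELECT, use reciprocity to conclude that $v$ receives the message and executes \texttt{ADOPT\_CHILD}, and then observe that every instruction removing elements from \texttt{children} precedes that addition in the round's control flow. You are in fact slightly more careful than the paper, which claims Line~\ref{line:intersection} is the \emph{only} removing instruction and overlooks the deletion inside \texttt{ADOPT\_PARENT} (harmless, since it too precedes the mailbox loop, as you note); your observation that forest consistency is not actually used is also accurate.
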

\begin{proof}
$u$ sets $u.parent$ to $v$ only if the FLIP or SELECT was successful 
(Lemma \ref{lem:add_parent}). 
Therefore $v$ has received the FLIP or SELECT message sent by $u$.
 
The addition of a node $u$ to $v.children$ by $v$ is done during
the excution of the procedure \texttt{ADOPT\_CHILD()} which is 
conditioned by the reception of a FLIP or a SELECT message $m_u$ from $u$  
($m_u.target=v$, line \ref{line:addChild}). 
The procedure \texttt{ADOPT\_CHILD()} is executed after line \ref{line:intersection}
which is the only instruction that could remove $u$ from $v.children$. 
So, $u \in \vip.children$. 
We have $\uim.parent \neq v \wedge \uip.parent = v 
\Rightarrow u \in \vip.children$.
 \end{proof}

\begin{lemma}
  \label{lem:add_child_add_parent}
Assume that at the beginning of round $i$, the configuration is forest consistent. If in round $i$, $v$ adds $u$ to $v.children$ then $\uip.parent = v$ : $u \not \in \vim.children \wedge u \in \vip.children \Rightarrow \uip.parent = v$.
\depends{\ref{lem:cond-2FLIPs}}
\end{lemma}
\begin{proof}
 $v$ adds $u$ to $v.children$ only if it excutes the procedure 
\texttt{ADOPT\_CHILD()}
 which is conditioned by the reception of a FLIP or a SELECT 
sent by $u$.
As the reception of messages is reciprocal, 
$u$ also receives in round $i$ a message from $v$. This satisfies the condition for $u$ to execute the procedure 
\texttt{ADOPT\_PARENT()} which sets $u.parent$ to $v$.

Only the execution of \texttt{BECOME\_ROOT()} (at line \ref{line:becomeRoot}) 
could modify the value of $u.parent$.
This procedure would be executed only if $u$ has received
 a FLIP during round $i$ which cannot be the case.
 Notice that $u$ does not receive a FLIP during the round $i$ (Lemma \ref{lem:cond-2FLIPs}).
 \end{proof}

\begin{lemma}
  \label{lem:rem_parent_rem_child}
Assume that at the beginning of round $i$, the configuration is forest consistent. If in round $i$, $u$ changes $u.parent$ from $v$ to another value then 
$u \not \in \vip.children$ : 
$\uim.parent = v \wedge \uip.parent \neq v \Rightarrow u \not 
\in \vip.children$.
\end{lemma}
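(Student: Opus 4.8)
The plan is to show that $u$ leaves $v$'s children set during round $i$ and is never re-added. First I would pin down what ``changing parent away from $v$'' forces. Since $\uim.parent=v$, Lemma~\ref{lem:N-parent} gives $\uip.parent\in\{v,\bot\}$, so the hypothesis $\uip.parent\neq v$ leaves only $\uip.parent=\bot$; by state consistency (Lemma~\ref{lem:L5}) this means $\uip.status=T$, i.e.\ $u$ executed \texttt{BECOME\_ROOT()} during round $i$. I would also record the non-re-addition fact right away: because $\uim.parent=v$ forces $\uim.status=N$ (Lemma~\ref{lem:L5}), $u$ sends neither FLIP nor SELECT in round $i$ (Lemma~\ref{lem:FLIP-SELECT-T}), so $v$ never receives such a message from $u$ and hence never calls \texttt{ADOPT\_CHILD()} on $u$ (the only instruction, line~\ref{line:addChild}, that could insert $u$ into $v.children$). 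Thus it suffices to prove that $u$ is removed from $v.children$ at some point of the round.

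For the removal, I would split on the two lines that can trigger \texttt{BECOME\_ROOT()}. If $u$ became root through the regeneration branch (line~\ref{line:become}), then $\uim.parent=v\notin u.\texttt{neighbors}$, so $u$ received no message from $v$; by the reciprocity principle $v$ received none from $u$, hence $u\notin v.\texttt{neighbors}$ and $u$ is dropped from $v.children$ at the intersection step (line~\ref{line:intersection}). If instead $u$ became root at line~\ref{line:becomeRoot}, then $u$ received a FLIP targeting it; by Lemma~\ref{lem:cond-single-Flip} (which uses forest consistency at the start of the round) this FLIP can only have been sent by $\uim.parent=v$, and since it was received the edge $\{u,v\}$ is present, so $v$'s own outgoing FLIP to $u$ was successful. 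Then $v$ passes the test of line~\ref{line:testadoptparent} and runs \texttt{ADOPT\_PARENT()} with a FLIP action, which sets $v.parent\gets u$ and removes its new parent $u$ from $v.children$. In either case $u$ has left $v.children$, and by the non-re-addition observation it stays out, giving $u\notin\vip.children$.

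The main obstacle is the second case: one must argue that the \emph{only} way $u$ can survive the intersection step (i.e.\ still be a neighbor of $v$) and yet have changed its parent is through $v$ circulating the token to $u$ by a successful FLIP, and that this very FLIP forces $v$ to discard $u$ from its children inside \texttt{ADOPT\_PARENT()}. This is where forest consistency is essential, via Lemma~\ref{lem:cond-single-Flip}, to guarantee that the FLIP received by $u$ originates from $v$ and not from some unrelated node; without the single-FLIP-sender property the provenance of the FLIP (and hence the matching \texttt{ADOPT\_PARENT()} at $v$) could not be controlled. The remaining details---that line~\ref{line:intersection}, \texttt{ADOPT\_CHILD()} and \texttt{ADOPT\_PARENT()} are the only instructions touching $v.children$, and that lines~\ref{line:become} and~\ref{line:becomeRoot} exhaust the calls to \texttt{BECOME\_ROOT()}---are routine bookkeeping over Algorithms~\ref{algo:main} and~\ref{algo:functions}.
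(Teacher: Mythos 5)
Your proof is correct and follows essentially the same route as the paper's: reduce to $\uip.parent=\bot$ via Lemma~\ref{lem:N-parent}, then case-split on the two call sites of \texttt{BECOME\_ROOT()}, using the reciprocity principle and line~\ref{line:intersection} for the disconnection case, and Lemma~\ref{lem:cond-single-Flip} plus the matching \texttt{ADOPT\_PARENT()} at $v$ for the received-FLIP case. Your explicit non-re-addition observation (that $u$, having status $N$ at the start of the round, sends no FLIP or SELECT and so can never trigger \texttt{ADOPT\_CHILD()} at $v$) is a small but worthwhile step that the paper's proof leaves implicit.
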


\begin{proof}
If $u$ changes $\uip.parent$ then we have $\uip.parent = \bot$ (Lemma \ref{lem:N-parent}).
Only the execution of \texttt{BECOME\_ROOT()} by $u$ sets $u.parent$ 
to $\perp$. The procedure \texttt{BECOME\_ROOT()} is executed
in two cases: at the detection of a disconnection (line \ref{line:become}), and
at the reception of a FLIP message (line \ref{line:becomeRoot}).

In the first case, the \textit{reciprocal reception property} ensures that $v$ 
does not receive the message sent by $u$.
So, $v$  removes $u$ from $children$ (line \ref{line:intersection}).

In the second case, $u$ receives a FLIP from $\uim.parent$ (Lemma 
 \ref{lem:cond-single-Flip}).
According to the \textit{reciprocal reception property}, $v$ receives the message sent by $u$ during the round $i$.
So, $v$ executes \texttt{ADOPT\_PARENT($(i^-)v.outMessage$)} which removes 
$u$ (i.e. $(i^-)v.outMessage.target$) from $v.children$ (line \ref{line:becomeOrdinary}).
 \end{proof}

\begin{lemma}
  \label{lem:rem_child_rem_parent}
Assume that at the beginning of round $i$, the configuration is forest consistent.  
 If in round $i$, $v$ removes $u$ from $v.children$ 
then $\uip.parent \neq v$ : $u \in \vim.children \wedge u 
\not \in \vip.children \Rightarrow \uip.parent \neq v$.
\end{lemma}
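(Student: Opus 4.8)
The plan is to exploit the forest-consistency hypothesis to recover $u$'s side of the picture, and then to read off from the code the only two ways in which $v$ can drop $u$ from its \texttt{children} set, showing that each of them forces $u$ to reset its \texttt{parent} to $\bot$.

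First I would translate the hypothesis. Since $u \in \vim.children$, forest consistency at the beginning of round $i$ gives $\uim.parent = v$, and hence $\uim.status = N$ by Lemma~\ref{lem:L5}. In particular, by Lemma~\ref{lem:FLIP-SELECT-T}, $u$ sends neither a FLIP nor a SELECT in round $i$, so by Lemma~\ref{lem:FLIP-PROCEDURE} it never executes \texttt{ADOPT\_PARENT()} during round $i$. This is the key \emph{no-reset} observation: the only instruction that could set $u.parent$ to the non-null value $v$ is \texttt{ADOPT\_PARENT()} (line~\ref{line:becomeOrdinary}), and it is never run; equivalently, Lemma~\ref{lem:N-parent} already tells us that $\uip.parent \in \{v,\bot\}$. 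It therefore suffices to prove that $u$ executes \texttt{BECOME\_ROOT()} in round $i$, which sets $\uip.parent=\bot \neq v$.

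Next I would enumerate how $v$ can remove $u$ from \texttt{children}. Inspecting Algorithms~\ref{algo:main} and~\ref{algo:functions}, the set \texttt{children} is only ever shrunk in two places: by the intersection with \texttt{neighbors} (line~\ref{line:intersection}), and by \texttt{ADOPT\_PARENT()} on a FLIP, which discards the newly adopted parent. Since $\uim.status=N$, node $u$ sends no FLIP nor SELECT, so $v$ never re-adds $u$ through \texttt{ADOPT\_CHILD()}; thus the net removal ($u\in\vim.children$ but $u\notin\vip.children$) must be caused by one of these two operations. In the first case, $u$ is dropped at line~\ref{line:intersection} because $u\notin$ \texttt{neighbors} of $v$, i.e. $v$ received no message from $u$ in round $i$; by the reciprocity principle $u$ received no message from $v$ either, so $\uim.parent=v$ is not in $u$'s \texttt{neighbors}. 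Together with $\uim.status=N$ this triggers \texttt{BECOME\_ROOT()} at line~\ref{line:become}. In the second case, $v$ adopts $u$ as parent through a FLIP, which via the guard on line~\ref{line:testadoptparent} means $v$ sent a \emph{successful} FLIP to $u$; by reciprocity $u$ receives this FLIP and hence executes \texttt{BECOME\_ROOT()} at line~\ref{line:becomeRoot}. Either way $u$ becomes root, and by the no-reset observation $\uip.parent=\bot\neq v$, as required.

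The main obstacle I anticipate is the bookkeeping around the order of operations within a round: one must make sure that the \texttt{BECOME\_ROOT()} call setting $\uip.parent=\bot$ is not subsequently undone. This is precisely why the argument first pins down $\uim.status=N$ and invokes Lemmas~\ref{lem:FLIP-SELECT-T} and~\ref{lem:FLIP-PROCEDURE} (or, equivalently, Lemma~\ref{lem:N-parent}) before the case analysis, so that \texttt{ADOPT\_PARENT()} --- the only competitor that could restore $u.parent=v$ --- is ruled out once and for all.
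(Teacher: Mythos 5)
Your proof is correct and follows essentially the same route as the paper's: the same two-case analysis of how $v$ can drop $u$ from \texttt{children} (the intersection at line~\ref{line:intersection} versus \texttt{ADOPT\_PARENT()} on a successful FLIP), with reciprocity forcing $u$ to call \texttt{BECOME\_ROOT()} in each case. Your explicit ``no-reset'' step (via Lemmas~\ref{lem:L5}, \ref{lem:FLIP-SELECT-T}, \ref{lem:FLIP-PROCEDURE}, or equivalently Lemma~\ref{lem:N-parent}) is left implicit in the paper but is a welcome tightening rather than a divergence.
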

\begin{proof}
$v$ removes $u$ from $v.children$ in two cases: 
at the detection of a disconnection 
($v$ does not receive a message from $u$, line \ref{line:intersection}), and
when $v$ executes (\texttt{ADOPT\_PARENT($(i⁻).v.outMessage$)}, line \ref{line:becomeOrdinary}) 

In the first case, the \textit{reciprocal reception property} ensures that 
$u$ does not receive the message sent by $v$ during the round $i$.
So, $u$ becomes a root : it executes the procedure
\texttt{BECOME\_ROOT()} (line \ref{line:become}).

In the second case, $v$ executes \texttt{ADOPT\_PARENT($(i⁻).v.outMessage$)}.
So $v$ did send a successful FLIP or SELECT (Lemma \ref{lem:FLIP-PROCEDURE}).
As $v$ removes $u$ from $v.children$ during the execution of
\texttt{ADOPT\_PARENT($(i⁻).v.outMessage$)},
  we have $(i^-).v.outMessage.target=u$ and
$(i^-).v.outMessage.action=FLIP$ (see the procedure \texttt{ADOPT\_PARENT($outMessage$)}).
So $v$ sends
a successful FLIP to $u$  during round $i$.
Therefore, in round $i$, $u$ 
executes the procedure \texttt{BECOME\_ROOT()} (line \ref{line:becomeRoot}):
$u$ sets $u.parent$ to $\perp$.
 \end{proof}

\begin{lemma}[Forest Consistency]
\label{lem:consistency}
Let $i$ be a round starting from a forest consistent configuration.
The configuration reached at the end of round $i$ is forest consistent
\depends{\ref{lem:add_parent_add_child}, \ref{lem:add_child_add_parent}, \ref{lem:rem_parent_rem_child}, \ref{lem:rem_child_rem_parent}}
\end{lemma}
\begin{proof}
The configuration after the round $i$ is forest consistent according 
to Lemmas \ref{lem:add_parent_add_child}, \ref{lem:add_child_add_parent}, 
\ref{lem:rem_parent_rem_child}, \ref{lem:rem_child_rem_parent}. 
Notice that in the case where $u$ does not change the value of its parent variable (\textit{resp}. 
$u$ stays  in $v.children$) during round $i$, at the end of round $i$ the forest consistency 
property is preserved according to the contraposition of Lemma \ref{lem:rem_child_rem_parent} 
(\textit{resp}. contraposition of Lemma \ref{lem:rem_parent_rem_child}) and the hypothesis.
 \end{proof}

\begin{theorem}[Consistency]
\label{th:consistency}
Every configuration is forest consistent.
\end{theorem}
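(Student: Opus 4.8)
The plan is to prove this by a straightforward induction on the round index $i$, using the two consistency results already established as the base case and the inductive step. All the genuine combinatorial content has been pushed into the preceding lemmas, so the theorem itself is essentially a bookkeeping argument that chains them together over the whole execution.

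First I would set up the induction on $i$ for the claim ``$C_i$ is forest consistent.'' The base case is immediate from Lemma~\ref{lem:C0-forestConsistency}, which states that the initial configuration $C_0$ is forest consistent (the {\tt parent} variables are all initialized to $\bot$, so the biconditional $\uip.parent = v \Leftrightarrow u \in \vip.children$ holds vacuously for $C_0$).

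For the inductive step, I would assume that $C_{i-1}$ is forest consistent for some $i \geq 1$. Since $C_{i-1}$ is precisely the state of the system at the beginning of round $i$, this means round $i$ starts from a forest consistent configuration, which is exactly the hypothesis required by Lemma~\ref{lem:consistency}. Applying that lemma yields that the configuration reached at the end of round $i$, namely $C_i$, is forest consistent. This closes the induction, and since every configuration is some $C_i$ with $i \geq 0$, we conclude that every configuration is forest consistent.

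I do not expect any real obstacle here: the only subtlety worth flagging is the alignment between the notion of a ``configuration'' and the beginning/end-of-round convention. Specifically, one must observe that the forest consistency of $C_{i-1}$ (a statement about the state \emph{after} round $i-1$) coincides with the premise of Lemma~\ref{lem:consistency} (a statement about the state at the \emph{beginning} of round $i$), which is justified by the identity $\uip.varname = \uipm.varname$ noted at the start of Section~\ref{sec:correctness}. Once that identification is made explicit, the induction goes through without further work.
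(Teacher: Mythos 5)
Your proof is correct and follows essentially the same route as the paper: the base case is Lemma~\ref{lem:C0-forestConsistency} and the inductive step is Lemma~\ref{lem:consistency}, chained by induction over the rounds. Your explicit remark about identifying the configuration after round $i-1$ with the state at the beginning of round $i$ is a detail the paper leaves implicit, but it does not change the argument.
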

\begin{proof}
$C_0$ is forest consistent (Lemma \ref{lem:C0-forestConsistency}).
The configuration reached after any round is 
forest consistent (Lemma \ref{lem:consistency}).
 \end{proof}


%
%

\subsection{Correctness (detailed proofs)}

\subsubsection{Correctness of the resulting forest after token circulation:}

\begin{lemma}
  \label{lem:single-Flip}
  Let $v$ be a node. Only 
$\vim.parent$ can  send a FLIP at destination of $v$ during the round $i$.
\depends{\ref{th:consistency}, \ref{lem:cond-single-Flip})}
\end{lemma}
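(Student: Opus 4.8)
The plan is to obtain this statement as an immediate corollary of Lemma~\ref{lem:cond-single-Flip}, which already proves exactly the same conclusion but under the additional hypothesis that the configuration is forest consistent at the beginning of round~$i$. The entire work, therefore, reduces to discharging that hypothesis, so that the qualifier ``if \dots is forest consistent'' can be dropped.

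First I would recall that the configuration present at the beginning of round~$i$ is precisely the configuration $C_{i-1}$ produced at the end of the previous round (or the initial configuration $C_0$ when $i=1$). By Theorem~\ref{th:consistency}, \emph{every} configuration is forest consistent; in particular the configuration at the beginning of round~$i$ is forest consistent, for every $i$. Hence the hypothesis required by Lemma~\ref{lem:cond-single-Flip} is automatically satisfied. Applying that lemma then yields directly that only $\vim.parent$ can send a FLIP at destination of $v$ during round~$i$, which is the claim.

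There is no genuine obstacle here: the lemma is purely a bootstrapping step that upgrades the conditional Lemma~\ref{lem:cond-single-Flip} into an unconditional one by invoking the already-established global consistency (Theorem~\ref{th:consistency}). The only point demanding care is to rule out circularity, i.e.\ to check that Theorem~\ref{th:consistency} does not itself rest on Lemma~\ref{lem:single-Flip}. Inspecting the dependency chain, Theorem~\ref{th:consistency} follows from Lemmas~\ref{lem:C0-forestConsistency} and~\ref{lem:consistency}, the latter in turn relying on Lemmas~\ref{lem:add_parent_add_child}--\ref{lem:rem_child_rem_parent}; the only flip-related fact used along this chain is the \emph{conditional} Lemma~\ref{lem:cond-single-Flip} (whose own proof uses the forest-consistency hypothesis directly, not any prior consistency theorem), and none of these invoke the present Lemma~\ref{lem:single-Flip}. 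The argument is therefore sound and non-circular.
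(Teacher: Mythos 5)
Your proposal is correct and matches the paper's own proof exactly: it invokes Theorem~\ref{th:consistency} to discharge the forest-consistency hypothesis and then applies Lemma~\ref{lem:cond-single-Flip}. Your additional check that the dependency chain is non-circular is a sensible precaution but not something the paper spells out.
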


\begin{proof}
At the beginning of round $i$, the configuration is forest consistent
(Theorem \ref{th:consistency}).
Therefore, only the node  $\vim.parent$ can prepare 
a FLIP message at destination of $v$, 
at the end of round $i-1$ (Lemma \ref{lem:cond-single-Flip}).
 \end{proof}

\begin{lemma}
 \label{lem:2FLIPs}
 If $u$ receives a FLIP in round $i$, then it does not 
send a FLIP nor a SELECT in round $i$.
\depends{\ref{th:consistency},  \ref{lem:FLIP-SELECT-T}, \ref{lem:L5}}
\end{lemma}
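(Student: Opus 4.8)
The plan is to prove Lemma~\ref{lem:2FLIPs}, which is exactly the statement of Lemma~\ref{lem:cond-2FLIPs} but \emph{without} the hypothesis that the configuration is forest consistent at the beginning of round $i$. Since Theorem~\ref{th:consistency} (Consistency) has now been established, that hypothesis is no longer an assumption we need to carry around: every configuration is forest consistent unconditionally.

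So the proof is a one-line bootstrapping argument. First I would invoke Theorem~\ref{th:consistency} to assert that the configuration at the beginning of round $i$ is forest consistent. This discharges the sole premise of Lemma~\ref{lem:cond-2FLIPs}. Then I would directly apply Lemma~\ref{lem:cond-2FLIPs}: since $u$ receives a FLIP in round $i$ and the starting configuration is forest consistent, $u$ sends neither a FLIP nor a SELECT in round $i$, which is precisely the conclusion we want.

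There is essentially no obstacle here; the work has already been done in Lemma~\ref{lem:cond-2FLIPs}. The whole point of this lemma is to upgrade a \emph{conditional} statement (holding only from a forest-consistent configuration) into an \emph{unconditional} one, now that we know via Theorem~\ref{th:consistency} that \emph{every} configuration is forest consistent. This is the same pattern used in Lemma~\ref{lem:single-Flip}, which upgrades Lemma~\ref{lem:cond-single-Flip} in identical fashion. If one unfolds the dependency, the substance lives in Lemma~\ref{lem:cond-2FLIPs}: there one takes the contrapositive, uses Lemma~\ref{lem:FLIP-SELECT-T} to deduce $\uim.status=T$, then Lemma~\ref{lem:L5} to get $\uim.parent=\bot$, whence forest consistency forces $u\notin\vim.children$ for every $v$, so no node prepared a FLIP targeting $u$ in round $i-1$ and $u$ cannot receive one in round $i$.

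\begin{proof}
At the beginning of round $i$, the configuration is forest consistent (Theorem~\ref{th:consistency}). The conclusion then follows directly from Lemma~\ref{lem:cond-2FLIPs}.
\qed \end{proof}
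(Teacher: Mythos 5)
Your proof is correct and matches the paper's own argument exactly: both invoke Theorem~\ref{th:consistency} to discharge the forest-consistency hypothesis and then apply Lemma~\ref{lem:cond-2FLIPs}. Nothing further is needed.
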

\begin{proof}
At the beginning of round $i$, the configuration is forest consistent
(Theorem \ref{th:consistency}).
Therefore no node has prepared a FLIP message at destination of $u$, 
in round $i-1$ (Lemma \ref{lem:cond-2FLIPs}).
 \end{proof}

\begin{lemma}[Adoption]
\label{lem:adoption}
If $u$ sends a successful FLIP or SELECT to $v$ in round $i$, then $\uip.status = N$ and $\uip.parent=v$.
\depends{\ref{lem:2FLIPs}}
\end{lemma}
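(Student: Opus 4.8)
The plan is to trace the execution of $u$ during round $i$ and show that the assignment performed by \texttt{ADOPT\_PARENT()} is both triggered and never undone. First I would observe that, since $u$ sends a FLIP or SELECT in round $i$, Lemma~\ref{lem:FLIP-SELECT-T} gives $\uim.status = T$. Because the message is successful, $\{u,v\}\in E_i$, so by the reciprocity principle $u$ also receives a message from $v$ in round $i$; hence $v$ appears among \texttt{mailbox.keys()}, that is, $v\in$~\texttt{neighbors}.

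Next I would inspect the two places where \texttt{status} and \texttt{parent} can be written before they are read out as $\uip.status$ and $\uip.parent$. At Line~\ref{line:become} the guard requires \texttt{status}$=N$, but at that point the status still equals $\uim.status=T$, so \texttt{BECOME\_ROOT()} is not invoked there. The guard at Line~\ref{line:testadoptparent}, by contrast, is satisfied: the variable \texttt{outMessage} still holds the message just sent (it is reset to $\bot$ only after the message-processing loop), its action lies in $\{FLIP,SELECT\}$, and its target $v$ belongs to \texttt{neighbors}. Therefore $u$ executes \texttt{ADOPT\_PARENT()} at Line~\ref{line:becomeOrdinary}, which sets \texttt{status}~$\gets N$ and \texttt{parent}~$\gets v$.

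The remaining work, and the only real obstacle, is to argue that these two values survive to the end of the round. The single instruction that could reset them is \texttt{BECOME\_ROOT()} at Line~\ref{line:becomeRoot}, which fires exactly when $u$ processes a FLIP message addressed to itself. Here I would invoke Lemma~\ref{lem:2FLIPs} in its contrapositive form: since $u$ sends a FLIP or SELECT in round $i$, it cannot receive a FLIP in round $i$, so the body containing Line~\ref{line:becomeRoot} is never entered. The only other per-message action, \texttt{ADOPT\_CHILD()}, updates \texttt{children} and possibly \texttt{score} but touches neither \texttt{status} nor \texttt{parent}; and the final message-preparation block runs under \texttt{status}$=N$, hence leaves both variables untouched. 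Collecting these observations yields $\uip.status = N$ and $\uip.parent = v$, as claimed.
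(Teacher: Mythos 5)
Your proof is correct and follows essentially the same route as the paper's: establish that the guard at Line~\ref{line:testadoptparent} holds (via reciprocity, $v\in\texttt{neighbors}$), so \texttt{ADOPT\_PARENT()} sets \texttt{status} to $N$ and \texttt{parent} to $v$, and then rule out the only undoing instruction, \texttt{BECOME\_ROOT()} at Line~\ref{line:becomeRoot}, by the contrapositive of Lemma~\ref{lem:2FLIPs}. Your version is somewhat more thorough (explicitly excluding Line~\ref{line:become} and checking the remaining code paths), but the decomposition and the key lemma are identical.
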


\begin{proof}
In round $i$, $u.outMessage.action=FLIP$ or SELECT  
and $v \in \uip.neighbors$. 
During the round $i$, $u$ executes the procedure \texttt{ADOPT\_PARENT()} 
(line \ref{line:becomeOrdinary})  
which sets $\uip.parent$ to $v$. 
According to Lemma \ref{lem:2FLIPs}, $u$ did not receive any FLIP 
message during the round $i$.
Only an execution of \texttt{BECOME\_ROOT()} by $u$ 
at line \ref{line:becomeRoot} can 
change the value of $u.parent$ 
during the round $i$. This line is not executed during round $i$.
 \end{proof}

\begin{lemma}
\label{lem:received-FLIP}
  If $u$ sends a successful FLIP to $v$, then $\vip.status = T$.
\depends{\ref{lem:2FLIPs}, \ref{lem:single-Flip}}
\end{lemma}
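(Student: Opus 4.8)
The plan is to trace the fate of node $v$'s \texttt{status} variable through the round $i$ in which it receives the FLIP, and to argue that it ends the round with value $T$. First I would note that, since $u$ sends a \emph{successful} FLIP to $v$, the edge $\{u,v\}$ is present in round $i$, so $v$ does receive this FLIP, and that message carries $target = v$ (i.e.\ $v$'s own ID). Consequently, while $v$ scans its mailbox, the branch guarded jointly by \texttt{message.target = ID} and \texttt{message.action = FLIP} is taken, and $v$ executes \texttt{BECOME\_ROOT()} at line~\ref{line:becomeRoot}, which sets its \texttt{status} to $T$.

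The key remaining step is to show that this assignment is not undone later in round $i$. The only procedure that writes $N$ into \texttt{status} is \texttt{ADOPT\_PARENT()}, invoked at line~\ref{line:becomeOrdinary}, and its execution requires $v$ to have prepared a FLIP or SELECT (the guard \texttt{outMessage.action} $\in \{FLIP,SELECT\}$). Here I would invoke Lemma~\ref{lem:2FLIPs}: because $v$ receives a FLIP in round $i$, it sends neither a FLIP nor a SELECT in round $i$; hence that guard is false and $v$ never calls \texttt{ADOPT\_PARENT()}. Since no instruction after the \texttt{BECOME\_ROOT()} call can reset \texttt{status} to $N$, we obtain $\vip.status = T$.

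I expect the main subtlety to lie in confirming that no \emph{other} message handled in the same mailbox loop interferes with the value just set. Only messages with $target = v$ reach the relevant branch, and among these only a FLIP triggers \texttt{BECOME\_ROOT()} (a harmless re-assignment of $T$), whereas a SELECT triggers only \texttt{ADOPT\_CHILD()}, which leaves \texttt{status} untouched; moreover, by Lemma~\ref{lem:single-Flip} at most one FLIP can be addressed to $v$. Thus the processing order is immaterial for \texttt{status}, and since the message-preparation phase does not modify \texttt{status} either, the value $T$ persists to the end of round $i$, giving $\vip.status = T$.
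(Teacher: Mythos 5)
Your proof is correct and follows essentially the same route as the paper's: show that the successful FLIP forces $v$ to execute \texttt{BECOME\_ROOT()} in the mailbox loop (so \texttt{status} becomes $T$), then argue that the unique $N$-writing procedure \texttt{ADOPT\_PARENT()} cannot undo it. The only, immaterial, difference is that the paper justifies the persistence of $T$ by program order --- the \texttt{ADOPT\_PARENT()} call site at line~\ref{line:becomeOrdinary} precedes the mailbox loop --- whereas you rule out that call entirely via Lemma~\ref{lem:2FLIPs}; both justifications are valid.
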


\begin{proof}
$v$ received $mu$ in round $i$, so $\{u,v\}\in E_i$.
$v$ executes the procedure \texttt{BECOME\_ROOT()} 
that changes $v.status$ to $T$. 
After the execution of line \ref{line:becomeOrdinary}, 
no instruction can set $v.status$ to $N$
until the end of round $i$. So $\vip.status = T$.
 \end{proof}

\begin{lemma}
  \label{lem:FLIP-valid}
  If $u$ sends a successful FLIP in round $i$, then $u$ is valid after round $i$.
\depends{\ref{lem:adoption}, \ref{lem:received-FLIP}}
\end{lemma}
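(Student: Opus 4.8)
The plan is to combine the two immediately preceding lemmas with the recursive definition of validity. Recall that $u$ being valid after round $i$ means, evaluated on the post-round values, that either $\uip.status=T$ or $\uip.parent$ is itself valid after round $i$. So it suffices to produce, for $u$, a valid parent after the round.

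First I would fix $v$ to be the target of the successful FLIP, i.e. $u$ sends a successful FLIP to $v$ in round $i$. Applying the Adoption lemma (Lemma~\ref{lem:adoption}) gives $\uip.status=N$ and $\uip.parent=v$. In particular the first disjunct of validity fails for $u$, so the whole claim reduces to showing that $v$, the new parent of $u$, is valid after round $i$.

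The key step is then Lemma~\ref{lem:received-FLIP}, which yields $\vip.status=T$. Hence $v$ satisfies the first disjunct of the validity definition on its own, so $v$ is valid after round $i$ with no further recursion. Since $\uip.parent=v$ and $v$ is valid, $u$ is valid after round $i$ by the second disjunct, which is exactly the statement.

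The only point requiring care — and the sole (minor) obstacle — is the bookkeeping of the before/after-round indices in the recursive validity definition: validity after round $i$ is validity at the beginning of round $i+1$, and by the identity $\uip.varname=\uipm.varname$ the post-round values are precisely the ones on which validity must be checked. Once this identification is made, no induction or case analysis is needed, because the recursion bottoms out immediately at $v$, which is a root (status $T$) after the round.
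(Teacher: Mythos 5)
Your proposal is correct and follows exactly the paper's own (much terser) argument: apply Lemma~\ref{lem:adoption} to get $\uip.parent=v$ and Lemma~\ref{lem:received-FLIP} to get $\vip.status=T$, so $u$'s new parent is a root and hence $u$ is valid. The extra care you take with the before/after-round indexing is a sound elaboration of the same reasoning, not a different route.
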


\begin{proof}
By Lemmas \ref{lem:adoption} and \ref{lem:received-FLIP} $u$'s parent has a status $T$ after round $i$.
 \end{proof}

\subsubsection{Proofs on score permutations:}

\begin{lemma}
\label{lem:received-FLIP-score}
  If $u$ sends a successful FLIP to $v$, then $\uim.score \leq \vip.score$.
\depends{\ref{lem:2FLIPs}, \ref{lem:single-Flip}}
\end{lemma}

\begin{proof}
$u$ sent a message $mu$ to $v$ at the beginning of round $i$ 
such that $mu.action=$ FLIP,  $mu.target=v.ID$ and $mu.score = \uim.score$. 
$v$ received $mu$ in round $i$, so $\{u,v\}\in E_i$.
$v$ executes the procedure
\texttt{ADOPT\_CHILD($mu$)} at line \ref{line:addChild} 
in round $i$. 
This procedure sets the current score of $v$ 
to $max(v.score,mu.score)$, as $mu.score=\uim.score$.
After the execution of this instruction, we have
$mu.score=\uim.score \leq v.score$. 
We notice that after this operation, no instruction 
can change the value of $v.score$  
(Lemma \ref{lem:single-Flip}.
 \end{proof}

\begin{lemma}
\label{lem:NO-FLIP-SCORE}
$\uim.score = \uip.score$ unless $u$ sends 
or receives a successful FLIP in round $i$.
\depends{\ref{lem:change_score}}
\end{lemma}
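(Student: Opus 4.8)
The plan is to prove the contrapositive by a direct syntactic inspection of Algorithms~\ref{algo:main} and~\ref{algo:functions}: I will show that $u.score$ can be modified during round $i$ only at Line~\ref{line:min-score} or at Line~\ref{line:max-score}, and that each of these two lines can execute only when $u$ sends a successful FLIP, respectively receives a FLIP.

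First I would enumerate every instruction that writes to the \texttt{score} variable. Neither the main loop (Algorithm~\ref{algo:main}) nor the procedures \texttt{BECOME\_ROOT()} and \texttt{PREPARE\_MESSAGE()} assign \texttt{score}; the latter only reads it to fill the outgoing message. The only two assignments to \texttt{score} are Line~\ref{line:min-score} inside \texttt{ADOPT\_PARENT()} and Line~\ref{line:max-score} inside \texttt{ADOPT\_CHILD()}. Hence if $\uim.score \neq \uip.score$, one of these two lines must have been executed during round~$i$, and it suffices to characterise when each can fire.

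Next I would tie each assignment to the corresponding FLIP event. Line~\ref{line:min-score} lies inside the branch of \texttt{ADOPT\_PARENT()} guarded by \texttt{outMessage.action = FLIP}, and \texttt{ADOPT\_PARENT()} is itself invoked only when the test at Line~\ref{line:testadoptparent} succeeds, namely \texttt{outMessage.action $\in$ \{FLIP,SELECT\}} together with \texttt{outMessage.target $\in$ neighbors}. Thus Line~\ref{line:min-score} executes only if $\uim.outMessage.action = FLIP$ and its target, say $v$, belongs to \texttt{neighbors}. Since \texttt{neighbors} is the set of senders of messages received in round~$i$ (Line~\ref{line:neighbors}), $v \in$ \texttt{neighbors} means $\{u,v\}\in E_i$, so by definition $u$ sends a successful FLIP to $v$ in round~$i$. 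Symmetrically, Line~\ref{line:max-score} lies inside the branch of \texttt{ADOPT\_CHILD()} guarded by \texttt{message.action = FLIP}, and \texttt{ADOPT\_CHILD()} is called (Line~\ref{line:addChild}) only for a message satisfying \texttt{message.target = ID}. Hence Line~\ref{line:max-score} executes only if $u$ receives a FLIP addressed to it; having been received, that FLIP is necessarily successful. Combining the two cases gives the claimed statement.

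The argument is essentially bookkeeping, so the only real care required is in matching the guard conditions to the formal definitions. The subtle point I would emphasise is the equivalence ``\texttt{target $\in$ neighbors}'' $\Leftrightarrow$ ``the edge is present in $E_i$'', which rests on the fact that \texttt{neighbors} is fetched from the keys of the current mailbox and therefore records exactly the present edges of round~$i$, the reciprocity principle guaranteeing that a received message witnesses an edge in $E_i$. No induction and no appeal to forest consistency is needed here.
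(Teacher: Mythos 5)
Your proof is correct and follows essentially the same route as the paper's: a direct inspection showing that the only two assignments to \texttt{score} (Line~\ref{line:min-score} in \texttt{ADOPT\_PARENT()} and Line~\ref{line:max-score} in \texttt{ADOPT\_CHILD()}) are both guarded by \texttt{action = FLIP}. You are in fact slightly more careful than the paper, which does not explicitly justify the word \emph{successful}; your observation that the guard \texttt{outMessage.target $\in$ neighbors} (resp.\ the very fact of reception) witnesses the presence of the edge in $E_i$ closes that small gap.
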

\begin{proof}

$u$ changes its $score$ value only by executing \texttt{ADOPT\_PARENT($m_u$)} or \texttt{ADOPT\_CHILD($m_u$)}. Both instructions that changes $u.score$ value in these procedures (Algorithm \ref{algo:functions}, line \ref{line:becomeOrdinary}, \ref{line:addChild}) are conditioned by $m_u.action=FLIP$.
%
%
%
 \end{proof}

\begin{lemma}
  \label{lem:change_score-once}
A node $u$ changes $u.score$ at most once during a round.
\depends{\ref{lem:NO-FLIP-SCORE}, \red{lem:single-Flip}, \ref{lem:2FLIPs}}
\end{lemma}

\begin{proof}
A node sends at most one  FLIP message during a round.
A node receives at most one FLIP message during a round 
(Lemma \ref{lem:single-Flip}).
Either a node receives a FLIP, sends one, or it does not receive and does not send a FLIP during a given
round (Lemma \ref{lem:2FLIPs}).
So, according to Lemma \ref{lem:NO-FLIP-SCORE}, 
a node changes $u.score$ at most once during a round.
 \end{proof}

\begin{lemma}
  \label{lem:permutation}
  Before each round, the set of scores is a permutation of the set of identifiers. 
\depends{\ref{lem:change_score_once}}
\end{lemma}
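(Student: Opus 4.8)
The plan is to argue by induction on the round number, with the invariant that the multiset $\{\uim.score : u \in V\}$ equals the (fixed) set of identifiers; since identifiers are distinct, this is equivalent to the scores forming a permutation of the IDs. For the base case, before round $1$ (configuration $C_0$) every node $u$ satisfies $u.score = ID(u)$ by initialization, so the invariant holds. For the inductive step I would assume the invariant before round $i$ and show that round $i$ transforms the global assignment of scores by a product of disjoint transpositions, which preserves the multiset. By Lemma~\ref{lem:NO-FLIP-SCORE} the only nodes whose score can change in round $i$ are those that send or receive a successful FLIP, and by Lemma~\ref{lem:change_score-once} each such node changes its score at most once; hence it suffices to understand the joint effect of a single successful FLIP.

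Next I would characterize that effect precisely. Suppose $u$ sends a successful FLIP to $v$ in round $i$. The message $u$ sends carries $\uim.score$, and by reciprocity $u$ also receives $v$'s message, which carries $\vim.score$. Node $u$ then executes \texttt{ADOPT\_PARENT} (Line~\ref{line:min-score}), setting $\uip.score = \min(\uim.score, \vim.score)$, while $v$ executes \texttt{ADOPT\_CHILD} (Line~\ref{line:max-score}), setting $\vip.score = \max(\vim.score, \uim.score)$. I would note that neither node's score has already been altered earlier in the round: $v$ does not execute \texttt{ADOPT\_PARENT} because it receives a FLIP (Lemma~\ref{lem:2FLIPs}), and $u$ does not execute a score-changing \texttt{ADOPT\_CHILD} because it does not receive a FLIP (Lemma~\ref{lem:2FLIPs}) and \texttt{ADOPT\_CHILD} touches the score only on a FLIP. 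Consequently $\{\uip.score, \vip.score\} = \{\uim.score, \vim.score\}$, so the pair of scores is merely exchanged between $u$ and $v$.

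It then remains to show these exchanges act on pairwise disjoint pairs, so that round $i$ realizes a genuine permutation of the global score assignment. A node that sends a FLIP does not receive one and vice versa (Lemma~\ref{lem:2FLIPs}); a node sends at most one FLIP, since it prepares a single \texttt{outMessage}; and a node receives at most one FLIP, necessarily from its parent (Lemma~\ref{lem:single-Flip}). Hence every score-changing node belongs to exactly one successful-FLIP pair, the pairs are disjoint, and each pair preserves its own two scores. Therefore $\{\uip.score : u \in V\} = \{\uim.score : u \in V\}$, which by the induction hypothesis is the set of identifiers; since $\uip.score = \uipm.score$, this is exactly the invariant before round $i+1$, closing the induction.

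The main obstacle is the bookkeeping of the inductive step, namely verifying that each successful FLIP induces a clean transposition of two pre-round scores rather than a more complicated update. This requires checking that the $\min$ and $\max$ operations act on the \emph{same} pair $\{\uim.score, \vim.score\}$ — which rests on the message scores being the pre-round values and on the reciprocity principle — and that no node is simultaneously an endpoint of two distinct successful FLIPs, for which Lemmas~\ref{lem:2FLIPs} and~\ref{lem:single-Flip} are the essential ingredients.
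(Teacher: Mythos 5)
Your proof is correct and follows essentially the same route as the paper's: induction on rounds, reducing all score changes to successful FLIPs via Lemma~\ref{lem:NO-FLIP-SCORE} and Lemma~\ref{lem:change_score-once}, and showing that each successful FLIP merely transposes the two pre-round scores through the paired $\min$/$\max$ updates. Your write-up is in fact slightly more careful than the paper's, since you explicitly verify that the $\min$ and $\max$ act on the same pre-round pair and that the FLIP pairs are disjoint, whereas the paper asserts the swap more tersely.
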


\begin{proof}
After the initialization in each node $u$, $u.score=u.ID$. A node $u$ changes its score only by executing \texttt{ADOPT\_PARENT()} or \texttt{ADOPT\_CHILD()}.
We will do a proof by induction.
We assume at the beginning of round $i$, the set of scores is a a permutation of the set of indentifiers.
We have for any node $u$, $mu.score=\uim.score$.

According to Lemma \ref{lem:NO-FLIP-SCORE}, only a node sending or receiving a successful FLIP may change its $score$ value. 
Assume that the node $u$ changes its $score$ value during round $i$. Without lost of generality, we assume $u$
sends the successful FLIP to a node $v$ in round $i$.

By hypothesis, $u$ changes its $score$ to $\vim.score$ during the execution of \texttt{ADOPT\_PARENT()} in round $i$. We have $\uim.score \ geq \vim.score$. 
$v$ executes the procedure
\texttt{ADOPT\_CHILD($mu$)} at line \ref{line:addChild} 
in round $i$. 
This procedure sets the current score of $v$ 
to $max(v.score,mu.score)$, as $mu.score=\uim.score$.
After the execution of this instruction, we have
$ v.score=\uim.score$. 

According lemma \ref{lem:change_score-once}, we have 
$\vip.score=\uim.score$ and $\uip.score=\vim.score$. 
 \end{proof}

\subsubsection{Correctness of the resulting forest after mergings:}

In lemmas \ref{lem:selected-T} and  \ref{lem:selected-N}, we establish that
if $u$ sends a successful SELECT to $v$ in round $i$ either $\vim.status= T$ or $(i^-)v.parent.status= T$. In the first case, we have $\uim.score < \vim.score$, and in the second case, 
we have $\uim.score < (i^-)v.parent.score$.
Let $ch$ be a series of nodes $u_0, u_1, u_2$ such that $(i^+)u_j.parent=u_{j+1}$  
and such that $u_0$ sends a successful SELECT to $u_1$ during the round $i$.
As a $ch$'s subchain of nodes having strictly increasing scores at the beginning of round $i$ 
may be built: $ch$ has not loop. So $ch$ ends by a node having a token : all nodes on that chain are valid.

%

%

\begin{lemma}
  \label{lem:send-T-become-N-parent-T}
  If $v$ sends a message containing $T$ in round $i$ and $\vip.status=N$, let $w=\vip.parent$, then $\wip.status=T$.
\end{lemma}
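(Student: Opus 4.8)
The plan is to show that the two hypotheses together force $v$ to have sent a \emph{successful} FLIP during round $i$, after which the conclusion is immediate from the FLIP lemmas already proved.

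First I would pin down $v$'s status at the start of the round. Since $v$ sends a message containing $T$ in round $i$, Lemma~\ref{lem:T-if-T} gives $\vim.status = T$. Combined with the hypothesis $\vip.status = N$, this means $v$ changed its status from $T$ to $N$ during round $i$. Next I would identify the action responsible for this change: the only procedure that assigns $N$ to $status$ is \texttt{ADOPT\_PARENT()}, which is executed at Line~\ref{line:becomeOrdinary} exactly when $v$ sent a FLIP or SELECT whose \texttt{target} lies among its neighbors, i.e.\ a successful FLIP or SELECT. I would then rule out the other two message types: a SELECT carries \texttt{senderStatus}$=N$ (see \texttt{PREPARE\_MESSAGE}), so a message ``containing $T$'' cannot be a SELECT; and a HELLO would not trigger \texttt{ADOPT\_PARENT()}, hence would leave $\vip.status = T$, contradicting the hypothesis. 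Therefore the message $v$ sent is a \emph{successful FLIP}.

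Finally I would close the argument with the established FLIP lemmas. By the Adoption Lemma (Lemma~\ref{lem:adoption}), the target of this successful FLIP is precisely $\vip.parent = w$, and by Lemma~\ref{lem:received-FLIP} the recipient of a successful FLIP has status $T$ at the end of the round, i.e.\ $\wip.status = T$, as required.

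The main obstacle is the careful bookkeeping distinguishing the \texttt{senderStatus} field set at message preparation time from the node's actual status after the round: one must correctly match ``a message containing $T$'' to the FLIP branch while excluding SELECT (which carries $N$) and HELLO (which does not cause adoption), since the FLIP case is the unique one that both carries $T$ and can turn the sender into an ordinary node. Once the action is identified as a successful FLIP, everything else follows directly from Lemmas~\ref{lem:adoption} and~\ref{lem:received-FLIP}.
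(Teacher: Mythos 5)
Your proof is correct and follows essentially the same route as the paper's: both deduce from $\vip.status=N$ that $v$ must have executed \texttt{ADOPT\_PARENT()}, hence sent a successful FLIP or SELECT, and both exclude SELECT because its \texttt{senderStatus} field is $N$. The only difference is that you cite Lemmas~\ref{lem:adoption} and~\ref{lem:received-FLIP} to obtain $\vip.parent=w$ and $\wip.status=T$, whereas the paper re-derives these facts inline (via the reciprocity principle and Lemma~\ref{lem:2FLIPs}); your version is slightly more economical but substantively identical.
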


\begin{proof}
If $v$ sends a message containing $T$ in round $i$, then $\vip.status=T$. If $\vip.status=N$, then $v$ has executed \texttt{ADOPT\_PARENT()} in round $i$, because it is the only procedure that sets $v.status$ to $N$.
$v$ executes \texttt{ADOPT\_PARENT()} only if it has sent a FLIP message $m_v$ to a node $w$ ($m_v.action \neq$ SELECT because $m_v.senderStatus=T$), and if $w$ has received the message $m_v$ (\textit{reciprocal reception property}). At the reception of $m_v$ by $w$, $w$ executes \texttt{BECOME\_ROOT()} (line 16) which sets $w.status$ to $T$ and from this line until the end of the round no instruction can change $w.status$ to $N$. So $\wip.status=T$.

At the execution of \texttt{ADOPT\_PARENT()} by $v$, $v$ sets $v.parent$ to $w$. After this instruction there is only \texttt{BECOMES\_ROOT()} that can modifie the value of $v.parent$, and which is conditioned by the reception of a FLIP message. According to lemma \ref{lem:2FLIPs} $v$ cannot call \texttt{BECOMES\_ROOT()} because it cannot receive a FLIP message. So $w=\vip.parent$.

So, if $v$ sends a message containing $T$ in round $i$ and $\vip.status=N$, and $w=\vip.parent$, then $\wip.status=T$.
 \end{proof}

\begin{lemma}
\label{lem:sendT}
If $v$ sends a message containing $T$
in round $i$ and $\vip.status=N$, let $w=\vip.parent$, then $\wip.score \ge \vim.score$.
\depends{\ref{lem:received-FLIP}, \ref{lem:T-if-T}}
\end{lemma}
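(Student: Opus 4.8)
The plan is to reduce this score inequality to the already-proved relation for FLIP messages, by first re-establishing (as in the proof of Lemma~\ref{lem:send-T-become-N-parent-T}, whose hypotheses coincide with ours) that $v$ in fact sends a \emph{successful} FLIP to $w=\vip.parent$ in round $i$. First I would note that, since $v$ sends a message containing $T$, Lemma~\ref{lem:T-if-T} gives $\vim.status=T$; combined with the assumption $\vip.status=N$, node $v$ must change its status from $T$ to $N$ during round $i$. The only procedure that sets the status to $N$ is \texttt{ADOPT\_PARENT()}, and its execution is conditioned (line~\ref{line:testadoptparent}) on $v$ having sent a \emph{successful} FLIP or SELECT. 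Because $v$'s outgoing message carries status $T$, it cannot be a SELECT (which, by \texttt{PREPARE\_MESSAGE}, carries $N$) nor a HELLO (whose action is $\bot$ and would not trigger \texttt{ADOPT\_PARENT()}); hence it is a FLIP. Moreover the target assigned by \texttt{ADOPT\_PARENT()} is exactly $w$, and this value is not subsequently overwritten: by Lemma~\ref{lem:2FLIPs}, a node that sends a FLIP does not receive one, so \texttt{BECOME\_ROOT()} is not re-executed after the adoption. Thus $v$ sends a successful FLIP to $w$.

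Once this is in place, the inequality is immediate. I would invoke Lemma~\ref{lem:received-FLIP-score} with the roles instantiated as $u:=v$ and $v:=w$: that lemma asserts that when a node sends a successful FLIP to another, the sender's pre-round score does not exceed the receiver's post-round score, i.e. $\vim.score \leq \wip.score$, which is precisely the desired $\wip.score \geq \vim.score$. The substance of the score bookkeeping is already encapsulated there (the receiver sets \texttt{score} to the maximum of its own and the sender's score via \texttt{ADOPT\_CHILD()} on line~\ref{line:max-score}), so nothing further about the permutation needs to be re-derived here.

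I do not expect any genuinely hard step in this proof; it is essentially a short corollary of the preceding two lemmas. The only point requiring care is the message-type classification in the first paragraph: one must explicitly rule out HELLO (no induced parent change) and SELECT (carries status $N$) so as to conclude the $T$-carrying message is a FLIP, and confirm that the FLIP is successful precisely because the adoption took place. After that, the entire quantitative content is delegated to Lemma~\ref{lem:received-FLIP-score}.
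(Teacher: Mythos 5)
Your proof is correct and follows essentially the same route as the paper's: establish that $v$ must have sent a successful FLIP to $w$ (since its status drops from $T$ to $N$), then delegate the score inequality to Lemma~\ref{lem:received-FLIP-score}. If anything, your message-type case analysis (explicitly excluding SELECT and HELLO) is stated more carefully than in the paper's own proof, which reaches the same conclusion but with the sender/receiver roles written somewhat confusingly.
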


\begin{proof}
  We have $\vim.status=T$ because in round $i-1$, $v.status$ cannot be modified after the execution of \texttt{PREPARE\_MESSAGE()}. 
If $\vim.children \neq \emptyset$ 
then $v$ sends a FLIP message to one of its children, named $u$, in round $i$.
Either $\{u,v\}\in E_i$, then $\uip.parent=v$, $\vip.status = T$ and
$\uip.score \leq \vim.score$ 
(see Lemmas \ref{lem:received-FLIP} and \ref{lem:received-FLIP-score}).
Otherwise $\vip.status=T$.
 \end{proof}

\begin{lemma}
 \label{lem:selected-was-T}
If $u$ sends a successful SELECT to $v$ in round $i$ then $\vimm.status= T$.
\depends{\ref{lem:T-if-T}}
\end{lemma}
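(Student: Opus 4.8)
The plan is to trace the hypothesis back one round and then invoke Lemma~\ref{lem:T-if-T}. Since $u$ sends a SELECT to $v$ in round $i$, and every message is prepared at the end of the previous round and sent at the beginning of the next, $u$ must have executed \texttt{PREPARE\_MESSAGE(SELECT, contender)} at the end of round $i-1$ with $contender$ equal to $v$'s identifier (Line~\ref{line:prepareSelect}). In particular, during the message-processing loop of round $i-1$, the variable \texttt{contender} was assigned the value $v$.

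First I would observe that the only place where \texttt{contender} is assigned is the branch guarded by the condition $\texttt{message.status}=T$ (Line~\ref{line:mergingDetection}). Hence the message from $v$ that $u$ processed in round $i-1$ carried a status field equal to $T$. By the terminology fixed earlier, this is precisely the statement that $v$ \emph{sends a message containing $T$ in round $i-1$}.

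It then remains to apply Lemma~\ref{lem:T-if-T}, instantiated at round $i-1$: since $v$ sends a message containing $T$ in round $i-1$, we obtain $\vimm.status=T$, which is exactly the claim.

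I do not expect any genuine obstacle here; the argument is a short syntactic trace through Algorithm~\ref{algo:main} combined with one previously established lemma. The only point requiring care is the bookkeeping of rounds---the contender is set one round \emph{before} the SELECT is actually emitted---so that Lemma~\ref{lem:T-if-T} is applied to round $i-1$ rather than round $i$. I would also note that the \emph{successful} qualifier in the hypothesis is not actually used: the mere fact that $u$ \emph{sends} a SELECT to $v$ already forces $v$ to have advertised status $T$ in the preceding round, independently of whether the edge $\{u,v\}$ survives into round $i$.
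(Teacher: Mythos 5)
Your proof is correct and follows essentially the same route as the paper's: trace the SELECT back to the assignment of \texttt{contender} in round $i-1$, observe that this assignment is guarded by the received message having status $T$, and conclude that $\vimm.status=T$ via Lemma~\ref{lem:T-if-T} (the paper inlines the content of that lemma by noting that $v.status$ cannot change after \texttt{PREPARE\_MESSAGE()} in round $i-2$, but the \emph{depends} annotation confirms this is the intended argument). Your side remark that the ``successful'' qualifier is not needed is also accurate.
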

\begin{proof}
    Node $u$ prepared a SELECT message to $v$ in round $i-1$,
    thus it had $u.contender=v$, which implies it received from $v$ 
    a message containing $T$. We have then $\vimm.status= T$ because after the execution of \texttt{PREPERE\_MESSAGE()} by $v$ in round $i-2$, $v.status$ cannot be changed. 
 \end{proof}

\begin{lemma}
 \label{lem:selected-T}
If $u$ sends a successful SELECT to $v$ in round $i$ and $\vim.status= T$, then $\uim.score < \vim.score$.
\depends{\ref{lem:select-score}, \ref{lem:T-T-keep-score}, \ref{lem:selected-was-T}}
\end{lemma}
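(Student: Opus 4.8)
The plan is to chain the score inequality that holds \emph{before} round $i-1$ (supplied by Lemma~\ref{lem:select-score}) with a monotonicity property of $v$'s score \emph{across} round $i-1$. The statement compares $\uim.score$ with $\vim.score$, but the only inequality directly available involves $v$'s score one round earlier, so the whole difficulty is to bridge that one-round gap.

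First I would invoke Lemma~\ref{lem:select-score}. Since $u$ sends a SELECT to $v$ in round $i$, it gives $\uim.score < \vimm.score$, i.e. $u$'s score before round $i$ is strictly below the score $v$ advertised in round $i-1$ (its score before round $i-1$). This is \emph{not} yet the claimed bound, because $\vimm.score$ and $\vim.score$ may differ.

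Second, I would relate $\vimm.score$ to $\vim.score$ by showing that $v$'s score cannot decrease during round $i-1$. By Lemma~\ref{lem:selected-was-T}, the hypothesis that $u$ sends a successful SELECT to $v$ in round $i$ forces $\vimm.status = T$; together with the standing hypothesis $\vim.status = T$, node $v$ is a root both before and after round $i-1$. This is exactly the configuration handled by Lemma~\ref{lem:T-T-keep-score}, which I would apply to obtain $\vimm.score \le \vim.score$. Combining the two inequalities yields $\uim.score < \vimm.score \le \vim.score$, which is the claim.

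The main obstacle is the monotonicity step, and it is precisely what the auxiliary Lemma~\ref{lem:T-T-keep-score} is meant to discharge. The underlying reason is that, by Lemma~\ref{lem:NO-FLIP-SCORE}, a node changes its score only when it sends or receives a successful FLIP; a \emph{decrease} can only arise from the \texttt{min} in \texttt{ADOPT\_PARENT} (Line~\ref{line:min-score}), but executing \texttt{ADOPT\_PARENT} would set $v.status$ to N, contradicting $\vim.status = T$. Hence any score change of $v$ in round $i-1$ must come from the \texttt{max} in \texttt{ADOPT\_CHILD} (Line~\ref{line:max-score}), which is non-decreasing, so $\vimm.score \le \vim.score$. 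The subtle point to keep straight is the bookkeeping of the three time indices ($\vimm$, $\vim$, and the message $v$ actually emitted in round $i-1$), so that the strict inequality from Lemma~\ref{lem:select-score} lines up with the non-strict endpoint produced by Lemma~\ref{lem:T-T-keep-score}, preserving strictness in the final bound.
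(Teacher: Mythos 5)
Your proof is correct and takes essentially the same route as the paper: both chain Lemma~\ref{lem:select-score} (giving $\uim.score < \vimm.score$) with Lemma~\ref{lem:selected-was-T} (giving $\vimm.status=T$) and then use Lemma~\ref{lem:NO-FLIP-SCORE} to carry $v$'s score across round $i-1$ (the paper asserts equality $\vimm.score=\vim.score$, you settle for the non-decrease $\vimm.score\le\vim.score$, either of which suffices). The only caveat is that the auxiliary monotonicity lemma you cite by name does not actually appear in the paper, but your inline justification of it --- a decrease could only come from the \texttt{min} in \texttt{ADOPT\_PARENT}, which would force $\vim.status=N$ --- is sound and is precisely the reasoning the paper's terse citation of Lemma~\ref{lem:NO-FLIP-SCORE} leaves implicit.
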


\begin{proof}
    By Lemma~\ref{lem:selected-was-T} $\vimm.status=T$. Then Lemmas~\ref{lem:select-score} and~\ref{lem:NO-FLIP-SCORE} respectively imply that $\uim.score < \vimm.score$ and $\vimm.score = \vim.score$.
 \end{proof}

\begin{lemma}
 \label{lem:selected-N}
If $u$ sends a successful SELECT to $v$ in round $i$ and  
$\vim.status= N$, then let $w=\vim.parent$. 
It holds that $\wim.status= T$ 
and $\uim.score < \wim.score$.
\depends{\ref{lem:select-score}, \ref{lem:sendT}, \ref{lem:selected-was-T}, \ref{lem:send-T-become-N-parent-T}}
\end{lemma}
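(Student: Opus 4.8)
The plan is to push everything back one round, into round $i-1$, which is where $u$ actually prepared the SELECT and where $v$ lost its token. The first step is to record that $v$ broadcast a $T$-message in round $i-1$. Indeed, for $u$ to prepare a SELECT aimed at $v$ at the end of round $i-1$ it must have set $v$ as its contender, and the contender update (Line~\ref{line:mergingDetection}) is triggered only by a received message carrying status $T$; since the message $u$ receives from $v$ in round $i-1$ is exactly the one $v$ sends, $v$ sends a $T$-message in round $i-1$. (This is also the content extracted in the proof of Lemma~\ref{lem:selected-was-T}, which simultaneously gives $\vimm.status = T$.)

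Next I would feed the hypothesis into the right helper lemma. Recalling the identity that the value of any variable \emph{before} round $i$ equals its value \emph{after} round $i-1$, the hypothesis $\vim.status=N$ reads $\vimp.status=N$, and $w=\vim.parent=\vimp.parent$. Thus $v$ is a node that sends a $T$-message in round $i-1$ yet finishes that round with status $N$ — precisely the situation treated by Lemma~\ref{lem:send-T-become-N-parent-T}, read at round $i-1$ rather than $i$. It yields $\wimp.status=T$, i.e.\ $\wim.status=T$, which settles the first claim.

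For the score inequality I would chain two bounds of opposite strictness. On the one hand, Lemma~\ref{lem:select-score} applied to the SELECT that $u$ sends to $v$ in round $i$ gives the strict bound $\uim.score < \vimm.score$. On the other hand, Lemma~\ref{lem:sendT} applied at round $i-1$ (using again that $v$ sends a $T$-message in round $i-1$ and ends with status $N$, with parent $w$) gives $\wimp.score \ge \vimm.score$, i.e.\ $\wim.score \ge \vimm.score$. Composing, $\uim.score < \vimm.score \le \wim.score$, hence $\uim.score < \wim.score$, as required.

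The only real difficulty is bookkeeping of round indices: the two workhorse lemmas (\ref{lem:send-T-become-N-parent-T} and~\ref{lem:sendT}) are stated for a generic round, while all the relevant activity ($v$ being demoted by a FLIP from $w$, and $u$ recording $v$ as contender) takes place in round $i-1$. So the care lies entirely in matching premises under the index shift — rewriting $\vim.status=N$ as $\vimp.status=N$, identifying $w$ as $v$'s parent \emph{after} round $i-1$, and noting that the strict SELECT bound compares $\uim.score$ against $\vimm.score$ (the score $v$ announced) while the FLIP bound compares that same $\vimm.score$ against $\wim.score$. Once this alignment is in place, the status part is a direct instance of Lemma~\ref{lem:send-T-become-N-parent-T} and the score part is a one-line transitive combination, so no genuinely new argument is needed.
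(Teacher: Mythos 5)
Your proof is correct and follows essentially the same route as the paper's: it invokes Lemma~\ref{lem:selected-was-T} (and its underlying fact that $v$ sent a $T$-message in round $i-1$), then Lemma~\ref{lem:send-T-become-N-parent-T} for the status claim and Lemmas~\ref{lem:select-score} and~\ref{lem:sendT} chained together for the score inequality. The only difference is that you spell out the round-index shift explicitly, which the paper's terser proof leaves implicit.
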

\begin{proof}
    By Lemma~\ref{lem:selected-was-T} we have $\vimm.status=T$. 
    Then Lemmas~\ref{lem:select-score} and~\ref{lem:sendT} respectively imply that $\uim.score < \vimm.score$ and $\vimm.score \leq \wim.score$.  
Lemma \ref {lem:send-T-become-N-parent-T}  implies that $\wim.status= T$.
 \end{proof}

\begin{lemma}[Cancellation]
  \label{lem:CANCEL}
  If $u$ sends a failed FLIP or SELECT in round $i$, then $\uip.status = T$.
\depends{\ref{lem:FLIP-SELECT-T}}
\end{lemma}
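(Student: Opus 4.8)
The plan is to show that $u$'s status is never reset to $N$ during the round, so that it keeps the value $T$ it held at the start. First I would invoke Lemma~\ref{lem:FLIP-SELECT-T}: since $u$ sends a FLIP or SELECT in round $i$, we have $\uim.status=T$. It therefore suffices to argue that no instruction executed by $u$ in round $i$ sets its status to $N$.

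The only procedure that assigns $N$ to the status variable is \texttt{ADOPT\_PARENT()}, so the argument reduces to showing that $u$ does not execute this procedure in round $i$. Let $v$ be the target of the failed FLIP or SELECT, so that $\{u,v\}\notin E_i$. Since this edge is absent in round $i$, $u$ receives no message from $v$, hence $v\notin\uip.neighbors$ (recall that \texttt{neighbors} is exactly the set of senders whose messages were received, Line~\ref{line:neighbors}). The guard at Line~\ref{line:testadoptparent} that triggers \texttt{ADOPT\_PARENT()} requires that \texttt{outMessage.target} belong to \texttt{neighbors}; as this target is $v$ and $v\notin\uip.neighbors$, the guard is false and \texttt{ADOPT\_PARENT()} is not called.

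Since \texttt{ADOPT\_PARENT()} is the unique instruction that lowers a status to $N$, and it is not executed by $u$ in round $i$, the value of $u.status$ is never set to $N$ during the round. Combined with $\uim.status=T$, this gives $\uip.status=T$.

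There is essentially no hard step here: the only point requiring care is to enumerate correctly all the places where \texttt{status} can be modified. The sole other writer is \texttt{BECOME\_ROOT()}, which sets \texttt{status} to $T$ and hence can only help, so it need not be excluded; one could nonetheless remark, via the contrapositive of Lemma~\ref{lem:2FLIPs}, that $u$ receives no FLIP and so does not even reach the \texttt{BECOME\_ROOT()} call at Line~\ref{line:becomeRoot}, but this observation is unnecessary for the conclusion.
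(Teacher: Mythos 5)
Your proof is correct and follows exactly the same route as the paper's: invoke Lemma~\ref{lem:FLIP-SELECT-T} to get $\uim.status=T$, then observe that the absent edge $\{u,v\}$ means $v\notin\uip.neighbors$, so the guard at Line~\ref{line:testadoptparent} fails and \texttt{ADOPT\_PARENT()} --- the only writer of status $N$ --- is never executed. Your additional remark that \texttt{BECOME\_ROOT()} need not be excluded since it only writes $T$ is a sound (if unnecessary) extra precaution.
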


\begin{proof}
By lemma \ref{lem:FLIP-SELECT-T}, we have $\uim.status = T$. 
$v$ did not receive the message from $u$ implies that $\{u,v\}\notin E_i$. 
So, in round $i$, $v \notin u.neighbors$ ($u$ did not receive 
the message from $v$).
Only during the execution of \texttt{ADOPT\_PARENT()},
called in line \ref{line:becomeOrdinary}, $u$ can change its $status$ to $N$.
This procedure is not executed during the round $i$.
 \end{proof}

\begin{lemma}[Conservation]
  \label{lem:conservation}
  If $\uim.status=T$ and $u$ does not send a FLIP or SELECT in round $i$, then $\uip.status=T$.
\end{lemma}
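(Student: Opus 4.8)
The plan is to show that the hypothesis rules out the single mechanism by which $u.status$ could drop from $T$ to $N$ during a round. First I would recall the syntactic fact (already relied upon in the proof of Lemma~\ref{lem:L5}) that the procedure \texttt{ADOPT\_PARENT()} is the \emph{only} place in the algorithm where the value $N$ is ever assigned to \texttt{status}. Consequently, if $u$ does not execute \texttt{ADOPT\_PARENT()} in round $i$, then no assignment $u.status \gets N$ occurs in that round.

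Next I would invoke Lemma~\ref{lem:FLIP-PROCEDURE}: since by hypothesis $u$ sends neither a FLIP nor a SELECT in round $i$, it does not execute \texttt{ADOPT\_PARENT()} during round $i$. Combined with the previous observation, this means that \texttt{status} is never set to $N$ at $u$ throughout round $i$.

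Finally I would observe that the only other procedure touching \texttt{status}, namely \texttt{BECOME\_ROOT()}, can only assign it the value $T$. Hence any status assignment performed on $u$ during round $i$ (if one happens at all) results in $T$; and since $\uim.status = T$ at the outset, the value $T$ is preserved, giving $\uip.status = T$.

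I expect no genuine obstacle here: the statement is essentially an immediate corollary of Lemma~\ref{lem:FLIP-PROCEDURE} together with the observation that \texttt{ADOPT\_PARENT()} is the sole writer of $N$ into \texttt{status}. The only point deserving a moment of care is to confirm, by inspection of Algorithms~\ref{algo:main} and~\ref{algo:functions}, that there is indeed no execution path setting \texttt{status} to $N$ outside \texttt{ADOPT\_PARENT()}, so that the conservation of $T$ cannot be circumvented.
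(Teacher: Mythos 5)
Your proposal is correct and follows essentially the same route as the paper: invoke Lemma~\ref{lem:FLIP-PROCEDURE} to rule out an execution of \texttt{ADOPT\_PARENT()}, then note that this procedure is the only writer of $N$ into \texttt{status}. The extra remark that \texttt{BECOME\_ROOT()} can only write $T$ is a harmless (and slightly more careful) addition to the paper's two-line argument.
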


\begin{proof}
By lemma \ref{lem:FLIP-PROCEDURE}, $u$ does not execute the procedure \texttt{ADOPT\_PARENT()} during the round $i$.
$u$ can set \texttt{status} variable to $N$ only if it executes \texttt{ADOPT\_PARENT()}. 
 \end{proof}

\begin{lemma}
  \label{lem:all-but-SELECT-valid}
  If $\uim.status=T$ and $u$ does not send a successful SELECT in round $i$, then $u$ is valid after the round $i$.
\end{lemma}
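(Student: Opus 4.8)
The plan is to prove the statement by a straightforward case analysis on the message that $u$ sends in round $i$, since once $\uim.status=T$ is assumed, the fate of $u$ during the round is entirely dictated by which action (FLIP, SELECT, or HELLO) it emits and whether that message reaches its target. Note first that this hypothesis is consistent with all three action types, as Lemma~\ref{lem:FLIP-SELECT-T} only asserts that a FLIP or SELECT can be sent exclusively from a node whose status is $T$. The hypothesis that $u$ does not send a \emph{successful} SELECT will remove exactly one of the resulting subcases, and in every remaining subcase I will invoke an already established lemma to conclude that $u$ is valid after round $i$ (reading ``valid after round~$i$'' as the recursive validity condition applied to the $(i^+)$ values, so that establishing $\uip.status=T$ is enough by the first disjunct of the definition).

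Concretely, I would split as follows. If $u$ sends a successful FLIP in round $i$, then Lemma~\ref{lem:FLIP-valid} directly gives that $u$ is valid after round $i$ (its new parent is a root). If $u$ sends a failed FLIP or a failed SELECT, then the Cancellation Lemma~\ref{lem:CANCEL} yields $\uip.status=T$, so $u$ is valid after round $i$. The subcase of a successful SELECT is excluded by hypothesis. Finally, if $u$ sends neither a FLIP nor a SELECT, i.e.\ it sends a HELLO message (or prepares no structural message), then, since $\uim.status=T$, the Conservation Lemma~\ref{lem:conservation} again gives $\uip.status=T$, and $u$ is valid after round $i$.

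These subcases are exhaustive: the message sent in round $i$ carries exactly one action among FLIP, SELECT, and HELLO, and each of FLIP and SELECT is either successful or failed, giving five mutually exclusive possibilities, precisely one of which (successful SELECT) is ruled out by the hypothesis. I do not expect a genuine obstacle here, as the result is essentially an assembly step that packages the token-circulation outcome (Lemma~\ref{lem:FLIP-valid}) together with the cancellation and conservation results into the single statement that will be needed to handle non-SELECT-sending roots in the main validity induction. The only points to check carefully are the exhaustiveness of the case split and that each invoked lemma is applied under exactly its stated hypotheses, which the case labels guarantee.
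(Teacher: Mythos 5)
Your proposal is correct and follows essentially the same case analysis as the paper's own proof: successful FLIP handled by Lemma~\ref{lem:FLIP-valid}, failed FLIP or SELECT by Lemma~\ref{lem:CANCEL}, and the remaining (no FLIP/SELECT sent) case by Lemma~\ref{lem:conservation}. The only difference is that you spell out the exhaustiveness of the case split more explicitly, which the paper leaves implicit.
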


\begin{proof}
According to Lemma \ref{lem:FLIP-valid}, after the successful sending of a FLIP message in round $i$, $u$ is valid at the end of round $i$. 
If $u$ sends a failed SELECT or a failed FLIP then $u$ is valid after the round $i$ by Lemma~\ref{lem:CANCEL}.
otherwise, $u$ did not send a SELECT or a FLIP during the round :
it is also valid at the end of the round by Lemma~\ref{lem:conservation}.
 \end{proof}

\begin{lemma}
\label{lem:select_valid}
If a node sends a successful SELECT in round $i$,
then it is valid at the end of round $i$. 
\depends{\ref{lem:permutation}, \ref{lem:regeneration}, \ref{lem:adoption}, \ref{lem:send-T-become-N-parent-T}, \ref{lem:selected-T}, \ref{lem:selected-N}, \ref{lem:all-but-SELECT-valid}}
\end{lemma}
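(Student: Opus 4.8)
The plan is to prove the statement by strong induction on the rank of $u$'s score \emph{before} the round, crucially using that scores are pairwise distinct. For a node $x$ let $k(x)=|\{y : \state{x}{i}{-}.score < \state{y}{i}{-}.score\}|$ be the number of nodes whose score at the beginning of round $i$ exceeds that of $x$; by Lemma~\ref{lem:permutation} the scores form a permutation of the identifiers, so they are pairwise distinct and $k$ is a well-defined rank in $\{0,\dots,n-1\}$. I will prove, by strong induction on $k$, that every node sending a successful SELECT in round $i$ is valid after round $i$, the induction hypothesis being that the claim holds for every node with strictly larger score (equivalently, strictly smaller $k$).

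First I would fix a node $u$ sending a successful SELECT to some $v$ in round $i$. By Lemma~\ref{lem:adoption}, $\uip.status=N$ and $\uip.parent=v$; hence, by the recursive definition of validity, it suffices to show that $v$ is valid after round $i$. Next I apply Lemmas~\ref{lem:selected-T} and~\ref{lem:selected-N} to the two possible statuses of the target. Either $\vim.status=T$, in which case I set $w=v$; or $\vim.status=N$, in which case I set $w=\vim.parent$. In both cases these two lemmas furnish a node $w$ with $\wim.status=T$ and, most importantly, $\uim.score<\wim.score$ — the strict score increase that drives the whole argument. In particular $k(w)<k(u)$.

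The core step is the sub-claim that $w$ is valid after round $i$. Since $\wim.status=T$, there are exactly two possibilities. If $w$ does not send a successful SELECT in round $i$, then Lemma~\ref{lem:all-but-SELECT-valid} directly yields that $w$ is valid after round $i$. Otherwise $w$ does send a successful SELECT, and since $k(w)<k(u)$ the induction hypothesis applies and again gives that $w$ is valid after round $i$. It then remains to transfer this validity to $v$. In the case $w=v$ there is nothing to do. In the case $\vim.status=N$, $w=\vim.parent$, the node $v$ is not a root at the beginning of the round, so by Lemma~\ref{lem:N-parent} together with the state consistency of Lemma~\ref{lem:L5}, either $v$ regenerates a token during the round — whence $\vip.status=T$ and $v$ is trivially valid — or $v$ keeps its parent, i.e. $\vip.parent=w$, in which case the validity of $w$ propagates to $v$ by definition. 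Either way $v$ is valid after round $i$, and therefore so is $u$. Note that the base case $k(u)=0$ is vacuous: a node with the globally largest score cannot produce a witness $w$ with $\uim.score<\wim.score$, hence cannot send a successful SELECT.

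I expect the main obstacle to be precisely the well-foundedness that prevents the chain of selections from closing into a cycle: a priori, $u$ could select $w$, $w$ select $w'$, and so on, looping back without ever reaching a token. This is exactly what the unique-score mechanism forbids, since each successful SELECT strictly increases the witnessing score ($\uim.score<\wim.score$) and, by Lemma~\ref{lem:permutation}, no two scores coincide; packaging this as the strictly decreasing measure $k$ converts the informal ``the chain has no loop'' into a clean induction. The remaining delicate point is purely a matter of bookkeeping — verifying that the witness $w$ returned by Lemmas~\ref{lem:selected-T}/\ref{lem:selected-N} (a node that is a root \emph{before} the round) is genuinely the node reached from $u$ along \emph{post}-round parent pointers — which is discharged by Lemma~\ref{lem:adoption} for the first hop and by Lemma~\ref{lem:N-parent} (with Lemma~\ref{lem:L5}) for the second.
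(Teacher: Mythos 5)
Your proof is correct and is essentially the paper's argument: the paper takes the node of maximal score among the (hypothetically non-empty) set of invalid successful-SELECT senders and derives a contradiction, which is exactly your induction on score rank in extremal form, resting on the same ingredients (Lemma~\ref{lem:permutation} for distinctness, Lemma~\ref{lem:adoption} for the first hop, Lemmas~\ref{lem:selected-T}/\ref{lem:selected-N} for the strict score increase to a root $w$, and Lemma~\ref{lem:all-but-SELECT-valid} for the non-SELECT case). Your transfer of validity from $w$ to $v$ via Lemma~\ref{lem:N-parent} and Lemma~\ref{lem:L5} is a slightly cleaner bookkeeping of the paper's case split on whether $\{v,w\}\in E_i$, but the route is the same.
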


\begin{proof}
Let $S$ be the set of nodes that send a successful  
SELECT in round $i$ and are not valid at the end of round $i$. We will prove, by contradiction, that $S$ is empty. Assume $S$ is non-empty and consider the node in $S$ that had the largest score at the beginning of round (say, node $u$). Such a node exists by Lemma~\ref{lem:permutation}. We will prove that $u$ is valid after the round, which is a contradiction. Let $v$ be the recipient of $u$'s successful SELECT. By Lemma~\ref{lem:adoption} $\uip.parent=v$, thus is enough to show that $v$ is valid after round $i$ to get our contradiction. Let us examine both cases whether $\vim.state= T$ or $N$.

If $\vim.status= T$, then either $v$ also sends a successful SELECT in round $i$, or it does not. If it does not, then it is valid after round $i$ (Lemma~\ref{lem:all-but-SELECT-valid}). 
If it does, then it must be valid otherwise $u$ is not maximal in $S$ (Lemma~\ref{lem:selected-T}).

If $\vim.status= N$, then let $w=\vim.parent$. Two cases are considered, whether $\{v,w\} \in E_i$ or not. If $\{v,w\} \notin E_i$ then $\vip.status=T$ because the condition forces $u$ to call the procedure \texttt{BECOME\_ROOT()} in line~\ref{line:become} which makes it take the status $T$. After, $u$ can takes the status $N$, only during the execution of the procedure \texttt{ADOPT\_PARENT()}
in line~\ref{line:becomeOrdinary}. 
This procedure is called by $u$ only 
if $u$ did send a FLIP or a SELECT 
at the beginning of round $i$ by lemma  \ref{lem:FLIP-PROCEDURE}. 
By Lemma~\ref{lem:FLIP-SELECT-T}, 
this cannot happen.
Thus $v$ is valid after round~$i$. If $\{v,w\} \in E_i$, we use the fact that $\wim.status= T$ (Lemma~\ref{lem:send-T-become-N-parent-T}) to apply the same idea as we did above: either $w$ also sends a successful SELECT in round~$i$, or it does not. If it does not, then it is valid after round~$i$ (Lemma~\ref{lem:all-but-SELECT-valid}). If it does, then it must be valid otherwise $u$ is not maximal in $S$ (Lemma~\ref{lem:selected-N}).
 \end{proof}

\subsubsection{Correctness of resulting forest:}

\begin{lemma}
\label{lem:T-vald}
If $\uim.status=T$ then $u$ is valid after round $i$. 
\depends{  \ref{lem:CANCEL}, \ref{lem:FLIP-valid}, \ref{lem:conservation}, \ref{lem:select_valid}}
\end{lemma}
\begin{proof}
According to Lemma \ref{lem:select_valid}, after the successful sending of a SELECT message in round $i$, $u$ is valid at the end of round $i$. 
According to Lemma \ref{lem:FLIP-valid}, after the successful sending of a FLIP message in round $i$, $u$ is valid at the end of round $i$. 
If $u$ sends a failed SELECT or a failed FLIP then $u$ is valid after the round by Lemma~\ref{lem:CANCEL}.
In otherwise, $u$ is also valid the round by Lemma~\ref{lem:conservation}.

 \end{proof}

\begin{theorem}[Resulting forest correctness]
  \label{lem:nodes_validity}
  If all nodes are valid at the beginning of the the round $i$, then 
all nodes are valid after round $i$.
\depends{ \ref{lem:T-vald}, \ref{lem:N-parent} }
\end{theorem}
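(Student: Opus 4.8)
The plan is to fix an arbitrary node $u$ and show that it is valid after round $i$, arguing by induction on the length of the $parent$ chain that leads from $u$ to its root in the configuration at the beginning of round $i$. Because every node is valid at the beginning of round $i$, following $parent$ pointers from any node eventually reaches a node of status $T$; in particular the $parent$ relation contains no cycle at that instant, since a cycle could contain no root and would contradict the validity of its members. Hence each node $u$ has a well-defined depth $d(u)\ge 0$, namely the number of $parent$-hops from $u$ to the root of its tree before round $i$, and I can induct on this quantity.

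For the base case $d(u)=0$ we have $\uim.status=T$, and Lemma~\ref{lem:T-vald} gives directly that $u$ is valid after round $i$. For the inductive step $d(u)=d>0$, the node is not a root, so $\uim.status=N$ and its parent $v:=\uim.parent$ satisfies $d(v)=d-1$; by the induction hypothesis $v$ is valid after round $i$. I then invoke Lemma~\ref{lem:N-parent}, which guarantees that $\uip.parent=v$ or $\uip.parent=\bot$. In the first case $u$'s parent after the round is precisely the node $v$, which the induction hypothesis has already shown to be valid, so $u$ is valid by the recursive definition of validity. In the second case $\uip.parent=\bot$, and by state consistency (Lemma~\ref{lem:L5}) this forces $\uip.status=T$, so $u$ is again valid. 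Either way $u$ is valid after round $i$, which completes the induction and hence the theorem.

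The only delicate point is the justification that depth is well-defined, that is, that the initial $parent$ structure is acyclic: this is exactly what the hypothesis ``all nodes valid at the beginning of round $i$'' buys us, since the recursive validity predicate can be satisfied only if each parent chain terminates at a root. Once acyclicity is granted, the heavy lifting has already been carried out in the preceding lemmas. Lemma~\ref{lem:T-vald} discharges the roots (and thereby absorbs all the token-circulation, merging, and cancellation cases that were folded into its proof), while Lemma~\ref{lem:N-parent} guarantees that a non-root either keeps the parent it had---now known valid by the induction hypothesis---or cleanly promotes itself to a root. No fresh case analysis on the exchanged messages is required at this level; the theorem is essentially the clean inductive assembly of the two cited results.
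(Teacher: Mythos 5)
Your proof is correct and is essentially the same argument as the paper's: both reduce the claim to Lemma~\ref{lem:T-vald} for roots and Lemma~\ref{lem:N-parent} for non-roots, propagating validity along the parent chain that the hypothesis guarantees terminates at a root. The paper phrases this as a proof by contradiction (matching the finite pre-round chain against a hypothetical infinite post-round chain), whereas you recast it as a direct induction on depth with Lemma~\ref{lem:L5} handling the case $\uip.parent=\bot$; this is a presentational difference only, and arguably the cleaner write-up.
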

\begin{proof}

Assume that  a node $v$ is invalid after round $i$.
According to Lemma \ref{lem:T-vald}, $\vim.status = N$.

Let $u_0, u_1, u_2, ..., u_k$ be the finite series of nodes such 
that for $j \in [0,k-1]$,
$\uim_j.parent=u_{j+1}$, $\uim_k.status=T$, and $u_0=v$.
This series exists because $u$ is valid at the beginning of round $i$.

Let $u'_1, u'_2, ..., $ be the infinite series of nodes such 
that for all $j \geq 1$
$\uip'_j.parent=u_{j+1}$,  and 
$\vip.parent=u'_{1}$. 
This series exists because $v$ is invalid (by hypothesis).

According to Lemma \ref{lem:N-parent}, $j \in [1,k]$, $u_j=u'_j$.
According to Lemma \ref{lem:T-vald}, $u_k$ is valid.
So all nodes of the series $u_0, u_1, u_2, ..., u_k$ are valid.
There is a contradiction.
 \end{proof}

\section{Convergence and preliminary discussion on performance}
\label{sec:simulation}

Our main focus in this paper was to present the spanning forest algorithm and prove that it guarantees a number of key properties, whatever the dynamics. Somewhat ironically, the same properties are satisfied even if the algorithm does nothing beyond initialization: every node remains forever a single-node tree, which satisfies all the predicates. Naturally, one expects more than this from an algorithm, which brings us to topics related to convergence and performance. We offer here a preliminary discussion on these topics, starting with what quality metric is adapted in such as context and how our algorithm behaves in this respect.

\subsection{What metric does make sense?}
\label{sec:metric}

The natural way to define an optimal (at least, irreductible) state in a spanning forest problem for partitioned networks is to have every connected component spanned by a {\em single} tree. However, even though 
one expects changes to obey some natural constraints depending on the mobility scenario, it is unreasonable to expect that an algorithm (however good it be) has sufficient time to converge towards an irreductible state in-between changes. Another remark is that the execution of algorithms for this type of problems never terminates; they are ever going.

In this context, a reasonable metric for evaluating our algorithm (or comparing two algorithms) is rather the average {\em number} of trees in each connected component, taken {\it e.g.} over the execution or in a stationary regime (if a stochastic model of dynamic networks is used for generating an infinite lifetime network). This being said, if the network {\em were} to stabilize, then one would indeed expect that a single tree to span each component. Both aspects are now discussed.

\subsection{Convergence in case of network stability} At an abstract level, the spanning forest algorithm presented in this paper relies on random walks in trees. Since trees are bipartite graphs, it may so happen that two tokens never meet (at both extremities of a common edge), although their trees could have been merged. 
Standard techniques exist for preventing periodic walks, such as stopping the tokens occasionally (also called lazy walks). This variant is easy to incorporate in the existing algorithm, by having a node decide whether or not circulating the token (FLIP messages) with some probability. Apart from this, markov chain theory tells us that (again, if the graph does not change) the tokens {\em will} eventually meet and thus every component will eventually be spanned by a single tree. The speed of this convergence relates to the area of coalescing random walks (see {\it e.g.}~\cite{Cooper13}), which is out of the scope of this paper (and beyond our technical skills).

\subsection{A practical scenario}
We verified the applicability of our algorithm in a real world
scenario. The algorithm was implemented\footnote{The source code of our algorithm is available upon request.} using the JBotSim library~\cite{C13} and 
tested against the {\tt Infocomm06} dataset~\cite{infocom06}. This well known dataset is a record of the communication links among devices given to
 $78$ people during the {\sc Infocomm} conference in 2006.
The update rate for the links is every 120 seconds, which means that
the presence time of an edge is a multiple of 120 seconds, a somewhat
optimistic value. 
To counterbalance this, we chose pessimistic options as to the number of 
rounds the nodes can perform in one second: 10 rounds (mildly pessimistic) or 1 round (very pessimistic).
In each case, we measured the average number of trees per connected
component over the execution (as discussed in Section~\ref{sec:metric}). The results are shown on Figures~\ref{fig:infocomm10} and~\ref{fig:infocomm1}, in which every point corresponds an average over 100 executions. 
\begin{figure}[h]
\center
\includegraphics[scale=0.45]{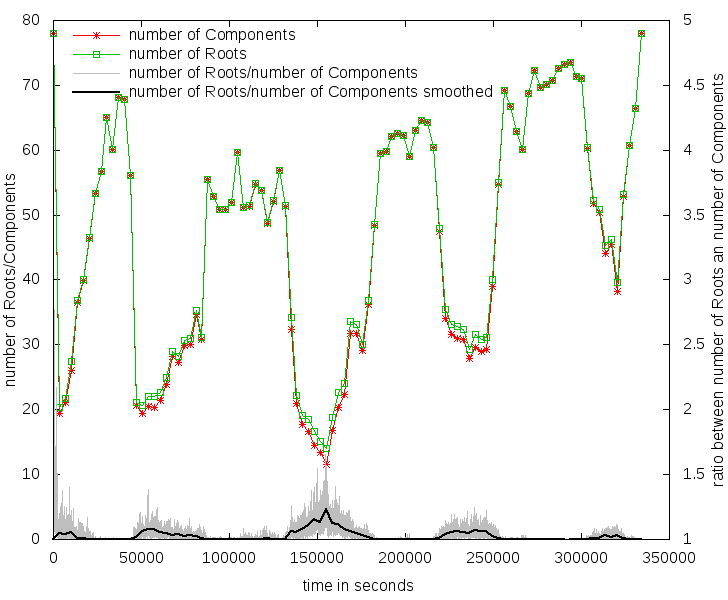}
\caption{Number of roots per connected components, assuming 10 rounds per second.}
\label{fig:infocomm10}
\end{figure}
\begin{figure}[h]
\center
\includegraphics[scale=0.45]{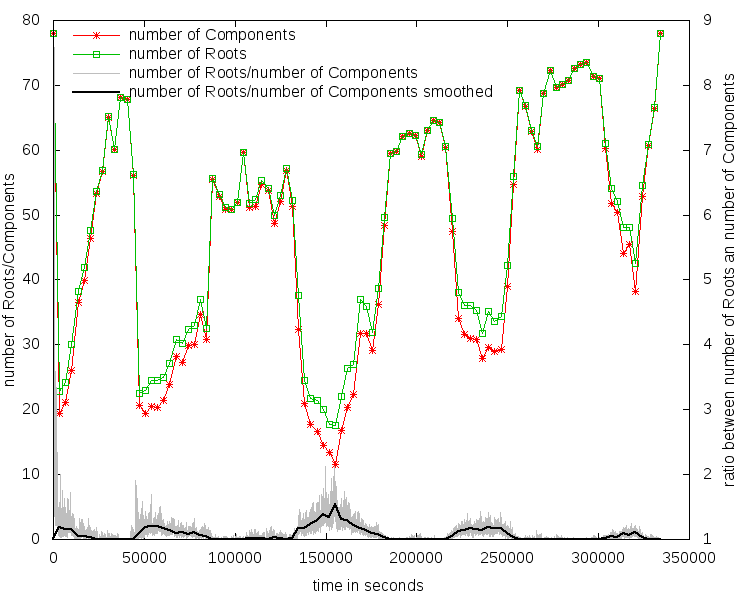}
\caption{Number of roots per connected components, assuming only 1 round per second.}
\label{fig:infocomm1}
\end{figure}

As one can see, the number of trees per connected component is often close to $1$ ($1.027$ in average in the first case, and $1.080$ in the second case). Furthermore, the algorithm achieves an optimal configuration of a single spanning tree per connected component  about $47\%$ of the time in the first case ($32.68\%$ in the second case), which implies that the algorithm may be relevant in practical scenarios and thus worth further investigation regarding its performance.

\section*{Acknowledgment}
{\small This work was partially supported by ANR projects ESTATE (ANR-16-CE25-0009-03) and DESCARTES (ANR-16-CE40-0023).
}

\bibliographystyle{plain}
\bibliography{spanning-forest}

\end{document}